\documentclass[conference]{IEEEtran}
\usepackage{proof}
\usepackage{xspace}
\usepackage{latexsym}
\usepackage{amsmath}
\usepackage{amssymb}
\usepackage{url}
\usepackage{stmaryrd}
\usepackage{cmll} 

\newtheorem{prop}{Proposition}



\newcommand{\ie}{i.e.}
\newcommand{\Ascr}{\mathcal{A}}

\newcommand{\Cscr}{\mathcal{C}}

\newcommand{\Rscr}{\mathcal{R}}

\newcommand{\ProofSystem}[1]{{\rm\textbf{#1}}}
\newcommand{\framework}{\ProofSystem{PSF}\xspace}
\newcommand{\basic}{$\hbox{\ProofSystem{B}}$\xspace}
\newcommand{\twophase}{$\hbox{\ProofSystem{F}}$\xspace}

\newcommand{\LJ  }{\hbox{\ProofSystem{LJ}}\xspace}

\newcommand{\LK  }{\hbox{\ProofSystem{LK}}\xspace}

\newcommand{\limp}{\multimap}

\newcommand{\isemp}[1]{\ifthenelse{\isempty{#1}}{\cdot}{#1}}

\newcommand{\twoseq}[2]{#1\vdash #2}
\newcommand{\match}[2]{\mathop{\Downarrow} #1\vdash #2}
\newcommand{\Ratch}[3]{#1\vdash #2\mathbin{\Downarrow} #3}
\newcommand{\sep}{\mathbin{|}}

\newcommand{\contrR }{\hbox{\textsl{cR}}\xspace}

\newcommand{\weakR  }{\hbox{\textsl{wR}}\xspace}

\newcommand{\derR  }{\hbox{\textsl{derR}}\xspace}
\newcommand{\derL  }{\hbox{\textsl{derL}}\xspace}
\newcommand{\decide }{\hbox{\textsl{decide}}\xspace}

\newcommand{\release}{\hbox{\textsl{release}}\xspace}

\newcommand{\init    }{\hbox{\textsl{init}}\xspace}
\newcommand{\initd   }{\hbox{\textsl{iou}}\xspace}
\newcommand{\initL   }{\hbox{\textsl{initL}}\xspace}
\newcommand{\initR  }{\hbox{\textsl{initR}}\xspace}
\newcommand{\clip   }{\hbox{\textsl{clip}}\xspace}
\newcommand{\dclip  }{\hbox{\textsl{dclip}}\xspace}
\newcommand{\cut    }{\hbox{\textsl{cut}}\xspace}

\newcommand{\debit  }{\hbox{\textsl{debit}}\xspace}

\newcommand{\Rules  }{\ensuremath{\Rscr}\xspace}
\newcommand{\FD     }{\ensuremath{\mathcal{FD}}\xspace}
\newcommand{\klassical}{\ensuremath{\Cscr}\xspace}

\newcommand{\Zero}{\textbf{0}}
\newcommand{\One}{\textbf{1}}
\newcommand{\Plus}{\mathbin{\boldsymbol{+}}}
\newcommand{\Times}{\mathbin{\boldsymbol{\times}}}
\newcommand{\Quant}{\boldsymbol{Q}}
\newcommand{\quant}[1]{\Quant\,#1.}
\newcommand{\sig}[1]{\Sigma#1\,\colon}
\newcommand{\rrght}[1]{\lceil\kern -1pt\lceil#1\rceil\kern -1pt\rceil}
\newcommand{\llft}[1]{\lfloor\kern -1pt\lfloor#1\rfloor\kern -1pt\rfloor}
\newcommand{\rght}[1]{\lceil#1\rceil}
\newcommand{\lft}[1]{\lfloor#1\rfloor}







\newcommand{\iimp}{\supset}
\newcommand{\ifalse}{\mathord{\bottom}}
\newcommand{\itrue}{\top}
\newcommand{\iforall}{\forall_{i}}


\newcommand{\zero}{0}
\newcommand{\one}{1}
\newcommand{\bottom}{\perp}
\newcommand{\bang}{\mathop{!}}
\newcommand{\quest}{\mathord{?}}
\newcommand{\tensor}\otimes

\newcommand{\lpar}{\bindnasrepma}  

\newcommand{\ngg}[1]{\overline{#1}}

\newcommand{\bias}[1]{\delta(#1)}
\newcommand{\Atoms}{{ \Ascr}}
\newcommand{\Super}{S}

\newcommand{\wedgeLm}{\mathord{\wedge}L^m}
\newcommand{\wedgeRm}{\mathord{\wedge}R^m}
\newcommand{\wedgeLa}{\mathord{\wedge}L^a}
\newcommand{\wedgeRa}{\mathord{\wedge}R^a}
\newcommand{\veeLm}{\mathord{\vee}L^m}
\newcommand{\veeRm}{\mathord{\vee}R^m}
\newcommand{\veeLa}{\mathord{\vee}L^a}
\newcommand{\veeRa}{\mathord{\vee}R^a}
\newcommand{\iimpLm}{\mathord{\iimp}L^m}
\newcommand{\iimpRm}{\mathord{\iimp}R^m}
\newcommand{\iimpLa}{\mathord{\iimp}L^a}
\newcommand{\iimpRa}{\mathord{\iimp}R^a}
\newcommand{\ifalseLa}{\mathord{\ifalse}L^a}
\newcommand{\itrueRa}{\mathord{\itrue}R^a}
\newcommand{\ifalseLm}{\mathord{\ifalse}L^m}
\newcommand{\itrueRm}{\mathord{\itrue}R^m}

\newcommand{\mathsl}[1]{#1}
\newcommand\njDash{\vdash_{\mathsl{nj}}}

\newcommand{\Supers}{\Upsilon}

\newcommand{\fib}[1]{f(#1)}
\newcommand{\Fib}[2]{F(#1,#2)}
\newcommand{\fref}[1]{Fig.~\ref{fig:#1}}

\newenvironment{proof}{\noindent {\emph{Proof}}~}{$\Box$}

\begin{document}
\title{A system of inference based on proof search:\\ 
       an extended abstract}
\author{\IEEEauthorblockN{Dale Miller}
\IEEEauthorblockA{
  \textit{Inria Saclay \& LIX, Institut Polytechnique de Paris}\\
  Palaiseau France \\
  Orcid: 0000-0003-0274-4954}}
\maketitle
\begin{abstract}
Gentzen designed his natural deduction proof system to ``come as close
as possible to actual reasoning.''  Indeed, natural deduction proofs
closely resemble the static structure of logical reasoning in
mathematical arguments.  However, different features of inference are
compelling to capture when one wants to support the process of
searching for proofs.  \framework (Proof Search Framework) attempts to capture
these features naturally and directly.  The design and metatheory of
\framework are presented, and its ability to specify a range of proof
systems for classical, intuitionistic, and linear logic is
illustrated.
\\[3pt] [What follows is a slightly revised version of the
paper that appears in the Proceedings of LICS 2023.]\\[1pt]
\end{abstract}

\begin{IEEEkeywords}
proof systems, proof search, logical frameworks
\end{IEEEkeywords}

\section{Introduction}

Inference and proofs are often described using proof rules of various
shapes.  For example, natural deduction and sequent calculus use figures such as 
\[   \newcommand{\xx}{\underline{~}}
  \vcenter{\infer{-}{- & - & \deduce{-}{\deduce{\vdots}{(-)}}}}
  \quad\hbox{and}\quad
  \vcenter{
  \infer{\xx,~\xx\longrightarrow \xx,~\xx\strut}
        {\xx\longrightarrow\xx,~\xx \qquad \xx,~
         \xx\longrightarrow \xx\strut}}
\]
These figures, introduced by Gentzen~\cite{gentzen35}, rely on several
punctuation marks such as the horizontal bar (to separate premises
from conclusion), vertical dots (for reasoning from assumptions),
parenthesized formulas (for discharging a formula), and the sequent
arrow.  The logical force implicit in the punctuation marks used to
describe proofs can invade the logic specified in that framework.  As
Wittgenstein has stated: ``Signs for logical operations are
punctuation marks.'' (Tractatus 5.4611, 1922).  While such influence
of the framework might be hard to avoid in general, we should be aware
of its influence and, at times, look for alternative systems of
punctuation.

Gentzen declared that his natural deduction system NJ was ``a formal
system which comes as close as possible to actual
reasoning''~\cite{gentzen35}.  Indeed, his natural deduction proof
systems have had great success ranging from being used in the teaching
of logical reasoning to the formal encoding of proofs as dependently
typed $\lambda$-terms.  However, since Gentzen's introduction of such
notation some four score and eight years ago, many
different priorities for logic and proof have appeared.

While natural deduction and sequent calculus have been used
successfully to describe the static structure of complete proofs (and
their transformation via normalization and cut elimination), the
dynamic structure of the \emph{search} for proofs is less well
captured by his systems.  Here, issues such as \emph{partial proofs},
\emph{invertible inference rules}, and \emph{don't care} and \emph{don't
know} non-determinism are particularly important to support.  

\section{Design motivated}
\label{sec:motivated}

Consider a sheet of paper on which a mathematician has written several
formulas at the top and one at the bottom.  Such a sheet is
useful to represent a proof gap, where one needs to find a logical
argument that connects the \emph{given} formulas at the top to the
intended \emph{consequence} written at the bottom.  In \framework, the
search for a proof is encoded as the rewriting of a collection of such
proof gaps recorded on sheets.  A sheet might rewrite to no additional
sheets if it is recognized as trivially proved: for example, because
the formula at the bottom of the sheet is also present at the top.
On the other hand, 
a sheet can rewrite to other sheets if solving those
additional sheets is understood as a way to solving the originating
sheet.  For example, a sheet containing the formula $even(n)\vee
odd(n)$ at the top can be rewritten to make two identical copies
except that $even(n)$ is put at the top in one and $odd(n)$ is put at
the top into the other.  The rule of cases would justify such a rewriting.
\framework encodes such sheets as \emph{multisets} of
\emph{tagged} formulas: if the logical formula $B$ appears at the
top of the sheet, it is placed into that multiset as $\lft{B}$; if it
appears at the bottom, it is placed into that multiset as $\rght{B}$
(see Section~\ref{ssec:classical}).

A feature of inference rules that \framework puts in prominence is the
difference between \emph{multiplicative} and \emph{additive} inference
rules.  The following are examples of the additive and multiplicative
versions of the right introduction for conjunction.
\[
  \infer{\twoseq{\Gamma}{A\wedge B,\Delta}}
        {\twoseq{\Gamma}{A,\Delta} & \twoseq{\Gamma}{B,\Delta}}
  \qquad
  \infer{\twoseq{\Gamma_1,\Gamma_2}{A\wedge B,\Delta_1,\Delta_2}}
        {\twoseq{\Gamma_1}{A,\Delta_1} & 
         \twoseq{\Gamma_2}{B,\Delta_2}}
\]
More generally, an inference rule is \emph{additive} if every
side-formula occurrence (\ie, those in $\Gamma$ and $\Delta$) also
occur in every premise.  A rule is called \emph{multiplicative} if
every side-formula occurrence (\ie, those in $\Gamma_1$, $\Gamma_2$,
$\Delta_1$, and $\Delta_2$) also occurs in exactly one premise.  A
rule with exactly one premise is additive exactly when it is 
multiplicative.  \framework contains two operators $\Plus$ and
$\Times$ responsible for injecting additive and multiplicative
features into inference systems encoded into it.

The multiplicative features of \framework are easily illustrated by
the need to rewrite multisets to other multisets.  In particular,
multisets will be encoded as expressions built from (some fixed set
of) atomic expressions along with $\Times$ for building a non-empty
multiset and its unit $\One$ denoting an empty multiset.  For example,
if $a$, $b$, and $c$ are atomic expressions, then $a\Times a\Times b$
denotes the multiset that contains two occurrences of $a$ and one
occurrence of $b$.  Rewriting a multiset $M$ to
another multiset $N$ using the rule $M_1\mapsto M_2$ (where $M_1$ and
$M_2$ are also multisets) is done using the following steps. (1) Split
$M$ into two parts $M'$ and $M''$.  (2) Determine that $M'$ is the
same multiset as $M_1$. (3) Identify $N$ with the multiset union of
$M_2$ and $M''$.  The following small proof system (extended
in the next section) can be used to describe such a computation.
\[
  \infer{\twoseq{}{\One,\Delta}}
        {\twoseq{}{\Delta}}
  \qquad
  \infer{\twoseq{}{E_1\Times E_2,\Delta}}
        {\twoseq{}{E_1,E_2,\Delta}}
  \qquad
  \infer{\twoseq{E}{E}}{}
\]
\[
  \infer{\twoseq{E_1\mapsto E_2}{\Delta_1,\Delta_2}}
        {\twoseq{E_1}{\Delta_1} & 
         \twoseq{}{E_2, \Delta_2}}
  \quad
  \infer{\twoseq{\One}{}}{}
  \quad~
    \infer{\twoseq{E_1\Times E_2}{\Delta_1, \Delta_2}}
        {\twoseq{E_1}{\Delta_1} & 
         \twoseq{E_2}{\Delta_2}}
\]
The left-introduction rule for $\mapsto$ achieves the three steps
mentioned above.  Step (1) is captured by splitting a multiset into
the union of $\Delta_1$ and $\Delta_2$ in that rule's conclusion.
Steps (2) and (3) are captured by the proofs of its left and right
premises, respectively.  The rewriting of the multiset $\{a, a, b\}$
into $\{a,c\}$ by the rule that replaces $a$ and $b$ with $c$ is
witnessed by a derivation of $\twoseq{a\Times b\mapsto c}{a, a, b}$
from the open premise $\twoseq{}{a,c}$.

Additive features are also incorporated into \framework using $\Plus$
and its unit $\Zero$: in particular, collections of multisets are
represented as a $\Plus$ of $\Times$ of atomic expressions.  Below we
list three additional features abstracted from searching for proofs
based on evolving collections of sheets.

\paragraph{Linear and classical realms}
When rewriting a sheet of paper to possibly other sheets, it is
usually the case that some items are retained while others might
disappear.  In particular, an assumption at the top of a sheet is
usually retained on all subproblems that are eventually rewritten from
it, while the goal formula on one sheet may or may not change.  For
example, if the goal formula is $A\iimp B$, then that goal is
\emph{replaced} by the goal formula $B$ with $A$ simultaneously added
at the top of the sheet.  The \framework recognizes this distinction
by classifying atomic expressions as being in either the \emph{linear
realm}---where such expressions might be deleted or replaced---or
the \emph{classical realm}---where such expressions persist through
all evolutions of a multiset.  (There is a strong influence of linear
logic~\cite{girard87tcs} on the design of \framework.)

\paragraph{Bottom-up and top-down reasoning}  These proof search 
styles appear in various different disguises in computational logic.
They differentiate Prolog from Datalog and tableaux from
resolution~\cite{chaudhuri08jar}.  Term representation is often
described using top-down proof structures, while term representations
that allow for explicit sharing can be justified using bottom-up
proof structures~\cite{miller23csl}.  In \framework, this distinction
comes into play using the notions of bias assignment and debts.

\paragraph{Don't care and don't know non-determinism} The
non-determinism encountered in the search for proofs can be
categorized as being either the \emph{don't care} or \emph{don't know}
varieties.  In \framework, inference rules will eventually be
organized into two phases: the \emph{right phase} will capture don't
care non-determinism and the \emph{left phase} will capture don't
know non-determinism.

\framework is presented using two inference system.  The basic
system, \basic, is presented in Section~\ref{sec:basic} while a more
structured variant, \twophase, is presented in Section~\ref{sec:two
  phase}.

\section{The basic inference system \basic}
\label{sec:basic}

\begin{figure}
\textsc{The right rules}
\[
  \infer{\twoseq{\Gamma}{\Zero,\Delta}}{}
  \qquad
  \infer{\twoseq{\Gamma}{E_1\Plus E_2,\Delta}}
        {\twoseq{\Gamma}{E_1,\Delta} &
         \twoseq{\Gamma}{E_2,\Delta}}
\]
\[
  \infer{\twoseq{\Gamma}{\One,\Delta}}
        {\twoseq{\Gamma}{\Delta}}
  \qquad
  \infer{\twoseq{\Gamma}{E_1\Times E_2,\Delta}}
        {\twoseq{\Gamma}{E_1,E_2,\Delta}}
\]
\textsc{The left rules}
\[
  \infer{\twoseq{\One}{}}{}
  \quad~
  \infer{\twoseq{\Gamma,R_1\Plus R_2}{\Delta}}
        {\twoseq{\Gamma,R_i}{\Delta}}
  \quad~
    \infer{\twoseq{\Gamma_1,\Gamma_2,R_1\Times R_2}{\Delta_1, \Delta_2}}
        {\twoseq{\Gamma_1,R_1}{\Delta_1} & 
         \twoseq{\Gamma_2,R_2}{\Delta_2}}
\]
\[
  \infer{\twoseq{\Gamma_1,\Gamma_2,R\mapsto E}{\Delta_1,\Delta_2}}
        {\twoseq{\Gamma_1,R}{\Delta_1} & 
         \twoseq{\Gamma_2}{E, \Delta_2}}
  \qquad
  \infer{\twoseq{\Gamma,R\Mapsto E}{\Upsilon,\Delta}}
        {\twoseq{R}{\Upsilon} & 
         \twoseq{\Gamma}{E,\Delta}}
\]
\[
  \infer[\decide,~\hbox{$\Gamma\subseteq\Rules$ is non-empty and finite}]
        {\twoseq{}{\Delta}}
        {\twoseq{\Gamma}{\Delta}}
\]
\[
  \infer[\debit_1]{\twoseq{\Gamma,A}{\Delta}}
                  {\twoseq{\Gamma}{\bar{A},\Delta}\quad\bias{A}=+1}
\]
\[
  \infer[\debit_2]{\twoseq{\Super}{\Upsilon}}
                  {\twoseq{}{\bar{\Super},\Upsilon}\quad\bias{\Super}=+2}
\]
\textsc{The identity rules}
\[
  \infer[\init]{\twoseq{E}{E}}{}
  \qquad
  \infer[\initd]{\twoseq{}{\bar{A},A}}{}
\]
\textsc{The structural rules}
\[
  \infer[\contrR]{\twoseq{\Gamma}{\Delta,\Super}}
                 {\twoseq{\Gamma}{\Delta,\Super,\Super}}
  \qquad
  \infer[\weakR]{\twoseq{\Gamma}{\Delta,\Super}}
                {\twoseq{\Gamma}{\Delta}}
\]
\caption{The \basic proof system.}
\label{fig:b3}
\end{figure}

\fref{b3} contains the inference system \basic, which contains all the
features we have motivated so far: additive and multiplicative
structures, proof state rewriting, debts, and the linear and classical
realms.  The schematic variables used in \fref{b3} are the following.
The variable $A$ ranges over some fixed set of atomic expressions.
The variables $E$ and $R$ range over expressions and rules and are
defined as follows.
\begin{align*}
E &::= A~\sep~\Zero~\sep~E_1\Plus E_2~\sep~\One~\sep~E_1\Times E_2\\
R &::= A~\sep~\Zero~\sep~R_1\Plus R_2~\sep~\One~\sep~R_1\Times R_2\\
     & \qquad\quad\sep~ R\mapsto E ~\sep~ R\Mapsto E
\end{align*}
The operators $\mapsto$ and $\Mapsto$ associate to the left while the
operators $\Plus$ and $\Times$ associate to the right.  A \emph{debt}
is an expressions of the form $\ngg{A}$.  The variable $\Gamma$ ranges
over multisets containing $R$-expressions, and the variable $\Delta$
ranges over multisets that can contain both $E$-expressions and debts.
The variable $\Rules$ denotes some countable set of $R$-expressions.
The function $\bias{\cdot}$ is a \emph{bias assignment}: it maps
atomic expressions to the set $\{-2,-1,+1,+2\}$ (a similar bias
assignment was used in \cite{liang11apal}).  The atomic expression $A$
is in the \emph{linear realm} if $\bias{A}$ is $\pm 1$ and in the
\emph{classical realm} if $\bias{A}$ is $\pm 2$.  If $\bias{A}>0$ then
a debit rule can be used with $A$.  The variable $\Upsilon$ ranges
over finite multisets of atomic expressions in the classical realm,
and the variable $\Super$ ranges over atomic expressions in the
classical realm.


A \basic-proof is \emph{atomically closed} if all occurrences of the
\init rule in it involve only atomic expressions, \ie, they are of the
form $\twoseq{A}{A}$ for an atomic expression $A$.

\begin{prop}[Completeness of atomically closed \basic-proofs]
\label{prop:ps3 atomically closed}
If the sequent $\twoseq{}{\Delta}$ has a \basic-proof then it has an
atomically closed \basic-proof.
\end{prop}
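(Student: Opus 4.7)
The plan is to reduce the proposition to an \emph{identity expansion} lemma: for every $E$-expression $E$, the sequent $\twoseq{E}{E}$ admits an atomically closed \basic-proof. Granting this, we transform any given \basic-proof of $\twoseq{}{\Delta}$ by replacing each occurrence of $\init$ whose principal expression $E$ is non-atomic by the atomically closed subproof of $\twoseq{E}{E}$ supplied by the lemma. Such substitutions are purely local and leave the surrounding derivation untouched, so the result is an atomically closed \basic-proof of $\twoseq{}{\Delta}$.

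The lemma itself is proved by structural induction on $E$, exploiting that every $E$-expression is also an $R$-expression and hence admissible on the left of the sequent arrow. If $E$ is atomic, apply $\init$ directly. If $E=\Zero$, the right $\Zero$ rule already provides the required axiom. If $E=\One$, stack the left $\One$ axiom $\twoseq{\One}{}$ under a right $\One$ step. If $E=E_1\Plus E_2$, first apply right $\Plus$, reducing the goal to the two premises $\twoseq{E_1\Plus E_2}{E_i}$ for $i\in\{1,2\}$; in each branch apply left $\Plus$ with the matching choice of disjunct $R_i=E_i$, reducing to $\twoseq{E_i}{E_i}$, to which the induction hypothesis applies. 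If $E=E_1\Times E_2$, apply right $\Times$ to obtain $\twoseq{E_1\Times E_2}{E_1,E_2}$, then left $\Times$ with both $\Gamma_i$ empty, $R_i=E_i$, and $\Delta_i=\{E_i\}$, reducing to the two sequents $\twoseq{E_i}{E_i}$ on which the induction hypothesis closes the construction.

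I do not anticipate a serious obstacle. The $\init$ rule forces its principal expression to lie in the intersection of the $R$- and $E$-grammars, so the connectives $\mapsto$ and $\Mapsto$, as well as debts, never have to be expanded. Termination of the recursive construction is immediate by structural induction on $E$, and the local replacement commutes trivially with the structural rules, with $\decide$, and with the debit rules elsewhere in the host proof, since none of those rules inspect the interior of an $\init$-leaf. The mildest point of care is to check, in each inductive step, that the schematic side-conditions of the rules used (multiset splittings for left $\Times$, choice of disjunct for left $\Plus$) can indeed be instantiated as described; this is straightforward once one notes that $E$-expressions embed into $R$-expressions.
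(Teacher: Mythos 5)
Your proposal is correct and follows essentially the same route as the paper: the paper's proof is precisely a one-line appeal to ``a simple induction on the structure of $E$'' showing that any non-atomic occurrence of $\twoseq{E}{E}$ can be replaced by an atomically closed proof, which is exactly the identity-expansion lemma you state and then verify case by case. Your write-up simply supplies the details the paper leaves implicit, and the cases are all handled correctly.
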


\begin{proof}
A simple induction on the structure of $E$ shows that any occurrence
of $\twoseq{E}{E}$ in which $E$ is not an atomic expression can be
replaced by a proof that is atomically closed.
\end{proof}

The proofs below concerning the \basic proof system will implicitly
apply the structural rules for atomic expressions with bias
assignments of $\pm 2$.  In particular, the part of a context
composed of just such atomic expressions, usually denoted with the
$\Upsilon$ variable, will be treated additively even within
multiplicative rules.


\begin{prop}[Clip-admissibility for \basic-proofs]
\label{prop:clip ps3}
The following inference rule (a simpler version of Gentzen's cut
rule) is admissible in \basic.
\[
  \infer[\clip]{\twoseq{\Gamma_1,\Gamma_2}{\Delta_1,\Delta_2,\Upsilon}}
              {\twoseq{\Gamma_1}{\Delta_1,E,\Upsilon}\qquad
               \twoseq{\Gamma_2,E}{\Delta_2,\Upsilon}}
\]
\end{prop}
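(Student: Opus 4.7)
The plan is a standard Gentzen-style cut-elimination argument adapted to the features of \basic. First, by Proposition~\ref{prop:ps3 atomically closed}, I assume without loss of generality that the two given derivations are atomically closed, so every $\init$ acts on an atomic expression. I then proceed by a lexicographic induction on the pair $(|E|, h_1+h_2)$, where $|E|$ is the size of the cut expression and $h_1, h_2$ are the heights of the two premise derivations.

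The inductive argument splits into principal and commuting cases on the last rules applied in the premises. In a principal case, $E$ is introduced in both premises by matching right and left rules: for $E = \One$, the two unit rules combine directly; for $E = E_1 \Plus E_2$, the $\Plus$-left rule chooses some subexpression $E_i$ and the cut reduces to a single smaller cut on $E_i$; for $E = E_1 \Times E_2$, the $\Times$-left rule's multiplicative split aligns with the single $\Times$-right premise containing $E_1, E_2$ to produce two successive cuts on strictly smaller subexpressions; for $E$ atomic, one premise must terminate in $\init$ and the cut collapses immediately. There is no principal case for $E = \Zero$ since \basic has no left rule for $\Zero$.

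In a commuting case, the last rule of one premise acts on a formula other than $E$, and I would permute the cut upward, decreasing $h_1+h_2$ while leaving $|E|$ fixed. For the multiplicative left rules ($\Times$-left, $\mapsto$-left, and $\Mapsto$-left), care is needed to place the cut inside the premise that inherits the cut formula occurrence. For $\contrR$ and $\weakR$, which act only on classical-realm atoms, the shared $\Upsilon$ in the two clip premises lets the structural step be re-applied after the cut.

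The main obstacle will be the interaction of the cut with the $\debit_1$, $\debit_2$, and $\decide$ rules. When the cut formula is an atom $A$ of positive bias, one premise may end in a debit rule that converts the $A$ on the left into its dual $\bar{A}$ on the right, expecting the debt to be later discharged by $\initd$; the cut must then be re-routed so that the induction hypothesis still applies, ultimately bottoming out at the identity rules. The convention recorded just before this proposition---that classical-realm contexts $\Upsilon$ are treated additively even inside multiplicative rules---is exactly what lets these permutations go through, since a $\Super$-atom may need to appear in several subderivations simultaneously. Permuting a cut past $\decide$ is unproblematic because $\Rules$ is a fixed side condition of the system rather than a component of the derivable sequent, so an application of $\decide$ in one premise can simply be reapplied below the permuted cut.
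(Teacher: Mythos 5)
Your proposal is correct and follows essentially the same route as the paper's proof: reduce to atomically closed derivations via Proposition~\ref{prop:ps3 atomically closed}, permute the clip toward the point where $E$ is introduced on both sides, resolve the principal cases by replacing one clip with one (for $\Plus$) or two (for $\Times$) clips on strictly smaller expressions, and close the argument with a standard induction (the paper leaves the measure implicit where you make the lexicographic pair $(|E|,h_1+h_2)$ explicit). The minor presentational difference---the paper permutes the offending rules \emph{down} past the clip and first normalizes so that $\Gamma_2$ is empty and $\Delta_1,\Delta_2$ are atomic, whereas you permute the clip \emph{up}---does not change the substance of the argument.
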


\begin{proof}
Consider the following \basic-proof with exactly one occurrence of the
\clip rule.
\[
  \infer[\clip]{\twoseq{\Gamma_1,\Gamma_2}{\Delta_1,\Delta_2,\Upsilon}}
               {\deduce{\twoseq{\Gamma_1}{\Delta_1,E,\Upsilon}}{\Xi_1}\qquad
                \deduce{\twoseq{\Gamma_2,E}{\Delta_2,\Upsilon}}{\Xi_2}}
\]
By Proposition~\ref{prop:ps3 atomically closed}, we can assume that
both $\Xi_1$ and $\Xi_2$ are atomically closed.  We proceed by
considering the structures of $\Xi_1$ and $\Xi_2$.  If either of these
proofs ends in a right rule for $\Delta_1$ or  $\Delta_2$, we can
permute those rule occurrences down.  Thus, we can assume that
$\Delta_1$ and $\Delta_2$ are multisets of atomic expressions.  Under
this assumption, we can also permute down any left rule that might
terminate $\Xi_2$.  In this case, we can assume that $\Gamma_2$ is
empty.  All that is left is showing how to permute the \clip rule up
into the left premise proof.

Consider the following instance of \clip.  Here, $E$ is either not 
atomic or it is atomic and $\bias{E}=\pm1$. 
\[
  \infer[\clip]{\twoseq{\Gamma_1,R\Mapsto E'}{\Delta_1,\Delta_2,\Upsilon}}
              {\infer{\twoseq{\Gamma_1,R\Mapsto E'}{E,\Delta_1,\Upsilon}}{
                \deduce{\twoseq{R}{\Upsilon}}{\Xi_1} &
                \deduce{\twoseq{\Gamma_1}{E,E',\Delta_1,\Upsilon}}{\Xi_2} &
}\qquad
               \deduce{\twoseq{E}{\Delta_2,\Upsilon}}{\Xi_3}}
\]
This instant can be rewritten to be
\[
  \infer{\twoseq{\Gamma_1,R\Mapsto E'}{\Delta_1,\Delta_2,\Upsilon}}
        {\deduce{\twoseq{R}{\Upsilon}}{\Xi_1} &
         \infer[\clip]{\twoseq{\Gamma_1}{E',\Delta_1,\Delta_2,\Upsilon}}{
                \deduce{\twoseq{\Gamma_1}{E, E', \Delta_1,\Upsilon}}{\Xi_2} &
                \deduce{\twoseq{E}{\Delta_2,\Upsilon}}{\Xi_3}}}
\]
In the case that $E$ is an atomic expression and $\bias{E}=\pm2$ then
the last inference rule of is either \init (in which case, \clip
is easily removed since $E\in\Upsilon$) or $\debit_2$ and, in that
case, $\Delta_2$ is empty (or a structural rule).  In this final case,
the proof above can be rewritten as
\[
  \infer{\twoseq{R\Mapsto E'}{\Delta_1,\Upsilon}}
        {\infer[\kern-4pt\clip]{\twoseq{R}{\Upsilon}}{
                      \deduce{\twoseq{R}{E,\Upsilon}}{\Xi_1}
                      &
                      \deduce{\twoseq{E}{\Upsilon}}{\Xi_3}}
         ~~
         \infer[\kern-4pt\clip.]{\twoseq{\Gamma_1}{E',\Delta_1,\Upsilon}}{
                \deduce{\twoseq{\Gamma_1}{E, E'\kern-2pt, \Delta_1,\Upsilon}}{\Xi_2} &
                \deduce{\twoseq{E}{\Upsilon}}{\Xi_3}}}
\]
The other cases regarding the structure of the $R$-expression in
$\Gamma$ are simple and direct.

The only remaining cases to consider is when $\Xi_1$ is a right rule
introducing $E$ and $\Xi_2$ is a left rule introducing $E$.
These cases are discussed below (remembering that $\Gamma_2$ is empty).

It is not possible for $E$ to be $\Zero$ since there is no such proof
$\Xi_2$.  If $E$ is $\One$, then $\Delta_2$ is empty and $\Xi_1$
replaces the clip rule.  If $E$ is $E_1\Plus E_2$ then we must have
\[
  \infer[\clip]{\twoseq{\Gamma_1}{\Delta_1,\Delta_2}}
              {\infer{\twoseq{\Gamma_1}{\Delta_1,E_1\Plus E_2}}{
                \deduce{\twoseq{\Gamma_1}{\Delta_1,E_1}}{\Xi_1'} &
                \deduce{\twoseq{\Gamma_1}{\Delta_1,E_2}}{\Xi_1''} &
}\qquad
               \infer{\twoseq{E_1\Plus E_2}{\Delta_2}}{
               \deduce{\twoseq{E_1}{\Delta_2}}{\Xi'_2}}}
\]
(where $\Delta_1$ and $\Delta_2$ contain only atomic expressions).
This instance of \clip can be replaced by the following instance of
\clip on smaller expressions.
\[
  \infer[\clip]{\twoseq{\Gamma_1}{\Delta_1,\Delta_2}}
              {\deduce{\twoseq{\Gamma_1}{\Delta_1,E_1}}{\Xi_1'} &
               \deduce{\twoseq{E_1}{\Delta_2}}{\Xi'_2}}
\]
The symmetric case is handled the same.
If $E$ is $E_1\Times E_2$ then we must have
\[
  \infer[\clip]{\twoseq{\Gamma_1}{\Delta_1,\Delta_2}}
              {\infer{\twoseq{\Gamma_1}{\Delta_1,E_1\Times E_2}}{
                \deduce{\twoseq{\Gamma_1}{\Delta_1,E_1,E_2}}{\Xi_1'}} &
               \infer{\twoseq{E_1\Times E_2}{\Delta',\Delta''_2}}{
               \deduce{\twoseq{E_1}{\Delta'_2}}{\Xi'_2} &
               \deduce{\twoseq{E_2}{\Delta''_2}}{\Xi''_2}}}
\]
(where $\Delta_1$ and $\Delta_2$ contain only atomic expressions).  This
instance of \clip can be replaced by the following instance of
\clip on smaller expressions. 
\[
  \infer[\clip]{\twoseq{\Gamma_1}{\Delta_1,\Delta'_2,\Delta'_2}}
              {\infer[\clip]{\twoseq{\Gamma_1}{\Delta_1,\Delta'_2,E_2}}{
                \deduce{\twoseq{\Gamma_1}{\Delta_1,E_1,E_2}}{\Xi_1'} &
                \deduce{\twoseq{E_1}{\Delta'_2}}{\Xi'_2}
} &
                \deduce{\twoseq{E_2}{\Delta''_2}}{\Xi''_2}}
\]

In general, one occurrence of clip can be replaced by two clips.
Standard induction arguments can now be used to complete this proof.
\end{proof}

\begin{prop}[Right rules are invertible]
\label{prop:invertible ps3}
The right rules are invertible.  In particular, if $E$ is
not atomic and the sequent $\twoseq{}{E,\Delta}$ is provable, then
there is a proof of this sequent in which the last inference rule is
an introduction rule for $E$.
\end{prop}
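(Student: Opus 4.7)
The plan is to derive invertibility directly from the clip-admissibility result (Proposition~\ref{prop:clip ps3}) rather than attempt a long induction on proof structure. For each non-atomic shape of $E$, the idea is to exhibit a small \basic-proof of an auxiliary sequent of the form $\twoseq{E}{E_0}$, where $E_0$ is the list of subformulas that appears in the premise of the right rule for $E$, and then use \clip to combine that auxiliary proof with the given proof of $\twoseq{}{E,\Delta}$. The resulting sequent is precisely the premise of the right-introduction rule for $E$; applying that rule at the bottom then produces a proof of $\twoseq{}{E,\Delta}$ with the desired final inference.

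Concretely, for $E = \One$, the axiom $\twoseq{\One}{}$ and the given proof of $\twoseq{}{\One,\Delta}$ clip to $\twoseq{}{\Delta}$, and the $\One$ right rule applies. For $E = E_1\Times E_2$, the left $\Times$ rule applied to two \init axioms proves $\twoseq{E_1\Times E_2}{E_1,E_2}$; one \clip then yields $\twoseq{}{E_1,E_2,\Delta}$, to which the $\Times$ right rule applies. For $E = E_1\Plus E_2$, two uses of the left $\Plus$ rule (with $i=1$ and $i=2$) atop \init give $\twoseq{E_1\Plus E_2}{E_1}$ and $\twoseq{E_1\Plus E_2}{E_2}$; two clips produce $\twoseq{}{E_1,\Delta}$ and $\twoseq{}{E_2,\Delta}$, which the two-premise right $\Plus$ rule combines. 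The case $E = \Zero$ is immediate since the right $\Zero$ rule is itself a zero-premise axiom.

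The only point worth pausing on is that each clip step must accommodate an arbitrary $\Delta$, possibly containing debts and classical-realm atoms. This is handled by instantiating Proposition~\ref{prop:clip ps3} with $\Upsilon$ empty and placing all of $\Delta$ in the linear component $\Delta_1$; the auxiliary sequents are closed on their right-hand sides by pure identity reasoning, so no further structural manipulation is required. This delegation of the real work to clip-admissibility is what keeps the argument short. The main technical burden one avoids by this route is exactly the alternative of permuting each right introduction downwards past every other inference rule, in particular the \decide, debit, and structural rules, which would be the difficult cases in a direct induction on proof structure.
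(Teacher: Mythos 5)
Your proposal is correct and follows essentially the same route as the paper's own proof: both derive invertibility from clip-admissibility by clipping the given proof of $\twoseq{}{E,\Delta}$ against a small auxiliary proof of $\twoseq{E}{E_0}$ built from \init{} and left rules (e.g., $\twoseq{E_1\Times E_2}{E_1,E_2}$, $\twoseq{E_1\Plus E_2}{E_i}$), and then reapplying the right rule at the bottom. The only difference is your explicit remark about placing all of $\Delta$ in the linear component of the clip instance, which the paper leaves implicit.
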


\begin{proof}
Let $\Xi$ be a proof of $\twoseq{}{E_1\Times E_2,\Delta}$.  Consider
\[
  \infer{\twoseq{\Gamma}{E_1\Times E_2,\Delta}}{
  \infer[\clip]{\twoseq{\Gamma}{E_1, E_2,\Delta}}{
      \deduce{\twoseq{}{E_1\Times E_2,\Delta}}{\Xi} &
      \infer{\twoseq{E_1\Times E_2}{E_1,E_2}}{
              \infer[\init]{\twoseq{E_1}{E_1}}{} & 
              \infer[\init]{\twoseq{E_2}{E_2}}{}}}}
\]
By Proposition~\ref{prop:clip ps3}, this proof with \clip can be
replaced by a proof without \clip: that proof ends in the
introduction of $E_1\Times E_2$.  The case where $E$ is $\One$ is
similar and simpler.  Let $\Xi$ be a proof of $\twoseq{}{E_1\Plus
  E_2,\Delta}$.  Consider
\[
  \infer{\twoseq{\Gamma}{\Delta,E_1\Plus E_2}}{
  \infer[\clip]{\twoseq{\Gamma}{\Delta,E_1}}{
      \deduce{\twoseq{}{E_1\Plus E_2,\Delta}}{\Xi} &
      \infer{\twoseq{E_1\Plus E_2}{E_1}}{
             \infer[\init]{\twoseq{E_1}{E_1}}{}}}
     & \infer[\clip]{\twoseq{\Gamma}{\Delta,E_2}}{
                    \hbox{similar}}}
\]
By Proposition~\ref{prop:clip ps3}, this proof with \clip can be
replaced by a proof without \clip.  The case where $E$ is $\Zero$ is
immediate.
\end{proof}

\begin{prop}[Clipping out debt]\label{prop:clip debt}
Let $C$ be an atomic expression.
If $\bias{C}=+1$, the following $\dclip_1$ rule is admissible.
\[
  \infer[\dclip_1]{\twoseq{}{\Delta_1,\Delta_2}}
                 {\twoseq{}{\Delta_1,C}\qquad
                  \twoseq{}{\Delta_2,\ngg{C}}}
\]
If $\bias{C}=+2$, the following $\dclip_2$ rule is admissible.
\[
  \infer[\dclip_2]{\twoseq{}{\Delta,\Upsilon}}
                 {\twoseq{}{\Delta,C}\qquad
                  \twoseq{}{\Upsilon,\ngg{C}}}
\]
\end{prop}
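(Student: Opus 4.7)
The plan is to reduce both claims directly to Proposition~\ref{prop:clip ps3} (clip-admissibility) by using the $\debit_1$ and $\debit_2$ rules to turn a sequent whose right-hand side contains the debt $\ngg{C}$ into a sequent whose left-hand side contains $C$.  After that, a single \clip on $C$ finishes the job.

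For $\dclip_1$, I have $\bias{C}=+1$, so the \basic rule
$\debit_1$ applies to the second premise
$\twoseq{}{\Delta_2,\ngg{C}}$ and yields $\twoseq{C}{\Delta_2}$.
Then an instance of \clip with cut-formula $E=C$, with both
$\Gamma_i$ empty and the shared $\Upsilon$-context empty, combines
$\twoseq{}{\Delta_1,C}$ with $\twoseq{C}{\Delta_2}$ to produce the
desired $\twoseq{}{\Delta_1,\Delta_2}$, and Proposition~\ref{prop:clip ps3} tells us this \clip is admissible.

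For $\dclip_2$, I have $\bias{C}=+2$, so $\debit_2$ applies to
$\twoseq{}{\Upsilon,\ngg{C}}$ and yields $\twoseq{C}{\Upsilon}$.  On
the other side, the first premise $\twoseq{}{\Delta,C}$ can be
weakened by $\Upsilon$ (using the classical-realm convention made
explicit immediately after Proposition~\ref{prop:ps3 atomically closed}, which is an iterated use of the \weakR rule since every atom in $\Upsilon$ lives in the classical realm) to obtain
$\twoseq{}{\Delta,C,\Upsilon}$.  Now a \clip on $C$ with shared classical context $\Upsilon$, $\Delta_1=\Delta$, and $\Delta_2=\emptyset$ delivers $\twoseq{}{\Delta,\Upsilon}$.

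There is essentially no obstacle: the only point worth flagging is that in the $\dclip_2$ case we must rely on the classical-realm treatment of $\Upsilon$ (weakening and, implicitly, contraction) to align the contexts before invoking \clip, whereas $\dclip_1$ is purely linear and requires no such adjustment.
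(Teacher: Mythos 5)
Your proof is correct and follows essentially the same route as the paper: in both cases the second premise is converted by $\debit_1$ (resp.\ $\debit_2$) into a sequent with $C$ on the left, and a single application of \clip, admissible by Proposition~\ref{prop:clip ps3}, yields the conclusion. Your explicit remark about weakening in the $\Upsilon$-context for $\dclip_2$ is exactly the implicit classical-realm convention the paper invokes, so nothing is missing.
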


\begin{proof}
Consider the following \basic-proof with one occurrence of $\dclip_1$:
here, $\bias{C}=+1$. 
\[
  \infer[\dclip_1]{\twoseq{}{\Delta_1,\Delta_2}}
                 {\deduce{\twoseq{}{\Delta_1,C}}{\Xi_1}\qquad 
                  \deduce{\twoseq{}{\Delta_2,\ngg{C}}}{\Xi_2}}
\]
This proof can be rewritten as
\[
  \infer[\clip]{\twoseq{}{\Delta_1,\Delta_2}}
              {\deduce{\twoseq{}{\Delta_1,C}}{\Xi_1}\qquad 
               \infer[\debit_1]{\twoseq{C}{\Delta_2}}{
               \deduce{\twoseq{}{\Delta_2,\ngg{C}}}{\Xi_2}}}
\]
Apply Proposition~\ref{prop:clip ps3} to finish this case.  Consider
the following proof with one occurrence of $\dclip_2$: here,
$\bias{C}=+2$.
\[
  \infer[\dclip_2]{\twoseq{}{\Delta,\Upsilon}}
                 {\deduce{\twoseq{}{\Delta,C}}{\Xi_1}\qquad 
                  \deduce{\twoseq{}{\Upsilon,\ngg{C}}}{\Xi_2}}
\]
This proof can be rewritten as
\[
  \infer[\clip]{\twoseq{}{\Delta,\Upsilon}}
               {\deduce{\twoseq{}{\Delta,C}}{\Xi_1}\qquad 
                \infer[\debit_2]{\twoseq{C}{\Upsilon}}{
                \deduce{\twoseq{}{\Upsilon,\ngg{C}}}{\Xi_2}}}
\]
Apply Proposition~\ref{prop:clip ps3} to finish this case.
\end{proof}

If an atomic expression $A$ has a positive bias value, the \debit rule
allows turning an obligation into find $A$ in the current multiset
into a promise to pay that obligation later, possibly after additional
rewriting takes place.  The proposition theorem states that once a
complete proof is built, possibly using the \debit rules, it is
possible to reorganize that proof so that no \debit rules are used.

\begin{prop}[Completeness without debit]\label{prop:debit independent}
If the sequent $\twoseq{}{\Delta}$ has a \basic-proof, it has a proof
without the $\debit_1$ and $\debit_2$ rules.
\end{prop}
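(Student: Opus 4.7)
The plan is to proceed by induction on the number of $\debit$-rule occurrences in the given \basic-proof of $\twoseq{}{\Delta}$. The base case is immediate. For the inductive step, I pick an uppermost $\debit$ occurrence---say a $\debit_1$ step on an atom $A$,
\[
\infer[\debit_1]{\twoseq{\Gamma,A}{\Delta'}}{\Pi:\twoseq{\Gamma}{\bar{A},\Delta'}}
\]
with $\Pi$ itself debit-free (the $\debit_2$ case is parallel)---and show how to rewrite this step away, strictly lowering the debit count so that the induction hypothesis applies to the result.

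The key observation is that in a debit-free subderivation no rule other than $\initd$ can introduce a debt on the right, and $\contrR$ acts only on classical atomic expressions, never on debts. Hence the occurrence of $\bar{A}$ in the endsequent of $\Pi$ traces upward, branching only at additive rules whose right-hand context is shared across premises (such as $\Plus$ right), to a collection of $\initd$ leaves of the form $\twoseq{}{\bar{A},A}$. My plan is to push the $\debit_1$ step upward through $\Pi$ by permuting it past each rule on the trace: for right rules, structural rules, and the multiplicative left rules the permutation is routine, with $A$ on the left being routed into the branch that formerly carried $\bar{A}$ on the right, and for $\Plus$ right the pushed $\debit_1$ is replicated into one copy per branch. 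When a pushed $\debit_1$ finally reaches an $\initd$ leaf, the composite $\debit_1\circ\initd$ applied to $\twoseq{}{\bar{A},A}$ yields exactly $\twoseq{A}{A}$, the $\init$ axiom, and the $\debit_1$ is absorbed rather than reappearing.

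The step I expect to be the main obstacle is the $\decide$ rule on the trace: its conclusion must have an empty left-hand side, which forbids the naive permutation that would leave $A$ on the left below the rearranged $\decide$. To handle this case, I would rearrange the local structure so that the $A$ supplied by the $\debit_1$ is absorbed directly by the $\decide$ (by enlarging its decided multiset with $A$) rather than propagating below it; correspondingly, the portion of the overall derivation below the original $\debit_1$ is adjusted to account for $A$ being decided earlier. Where this rearrangement is not directly compatible---for instance, if $A$ participates in a compound expression below before being decided---I would instead splice the pieces together using \clip (admissible by Proposition~\ref{prop:clip ps3}) against an $\init$ axiom on $A$, which realizes the same effect without reintroducing $\debit$. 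Iterating this procedure eliminates all picked $\debit$ occurrences, and the induction closes.
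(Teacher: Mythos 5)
Your plan---permuting the $\debit_1$ step upward along the trace of the debt $\ngg{A}$ until it meets the $\initd$ leaves, where $\debit_1$ composed with $\initd$ collapses to $\init$---breaks down at exactly the case you flag as the main obstacle, and neither of your proposed repairs closes it. The essential situation is that the debt $\ngg{A}$ survives across one or more \decide rules before being paid: that is the entire point of the debit mechanism (the debt is settled ``possibly after additional rewriting takes place''), so the \decide case is not a side condition but the whole content of the proposition. Pushing $\debit_1$ past such a \decide would require a \decide whose conclusion is $\twoseq{A}{\Delta'}$, which is not an instance of the rule. Your first repair, enlarging the decided multiset with $A$, fails twice over: $A$ need not belong to $\Rules$, and even if it did, the modified \decide would conclude $\twoseq{}{\Delta'}$ rather than the $\twoseq{A}{\Delta'}$ that the derivation below the original $\debit_1$ requires---that $A$ is a subexpression of some $R$-expression decided on lower down, and the left rules decomposing that $R$-expression will produce it on the left whether or not you ``decide on it earlier.'' Your second repair is inert: instantiating \clip with $E=A$ against the axiom $\twoseq{A}{A}$ (on either side) just returns the other premise, so it cannot absorb the stranded $A$. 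What is actually needed is a cut against the \emph{debt} itself, i.e., the admissible rule combining $\twoseq{}{\Delta_1,C}$ with $\twoseq{}{\Delta_2,\ngg{C}}$ into $\twoseq{}{\Delta_1,\Delta_2}$ (Proposition~\ref{prop:clip debt}), and that does not follow from \clip against an identity axiom alone.

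The paper's proof runs in the opposite direction and never crosses the \decide barrier. It replaces the $\debit_1$ occurrence (conclusion $\twoseq{A}{\Delta_0}$, premise proof $\Xi$ of $\twoseq{}{\ngg{A},\Delta_0}$) by the axiom $\twoseq{A}{A}$, propagates the replacement of $\Delta_0$ by $A$ down the right-hand sides of the left phase to the conclusion $\twoseq{}{A,\Delta_1}$ of the governing \decide, and then reattaches $\Xi$ \emph{below} that \decide with one application of $\dclip_1$ (resp.\ $\dclip_2$), whose admissibility is Proposition~\ref{prop:clip debt} and rests on clip elimination (which introduces no new debits). Each step thus trades one debit for one eliminable $\dclip$. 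If you want to salvage your argument, stop the upward permutation at the first \decide on the debt's trace and switch to this clip-out-the-debt move there; as written, the proof has a gap precisely where the work lies.
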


\begin{proof}
We systematically replace an occurrence of the $\debit_1$ inference
rule (above a \decide rule) with \init and $\dclip_1$
(below the \decide rule).  That is, we transform 
\[
  \infer[\decide]
        {\twoseq{}{\Delta_0,\Delta_1}}{
         \infer{\twoseq{\Gamma}{\Delta_0,\Delta_1}}
               {\strut\cdots\qquad & 
                \deduce{\vdots}{
                \infer[\debit_1]
                      {\twoseq{A}{\Delta_0}}
                      {\deduce{\twoseq{}{\ngg{A},\Delta_0}}{\strut\Xi}}}
                       & \cdots}}
\]
into the following proof containing $\dclip_1$.  Here, we 
replaced $\Delta_0$ with $A$ in some of the sequents and then used the
\clip rule to reintroduce the $\Delta_0$ expressions.
\[
  \infer[\dclip_1]{\twoseq{}{\Delta_0,\Delta_1}}{
  \infer[\decide]
        {\twoseq{}{A,\Delta_1}}{
         \infer{\twoseq{\Gamma}{A,\Delta_1}}
               {\strut\cdots\qquad & 
                \deduce{\vdots}{
                \infer[\init]
                      {\twoseq{A}{A}}
                      {}}
                       &
                       \cdots}}
         & \deduce{\twoseq{}{\ngg{A},\Delta_0}}{\strut\Xi}}
\]
We also can systematically replace an occurrence of the $\debit_2$
inference rule (above a \decide rule) with \init and $\dclip_2$
(below the \decide rule).  That is, we transform a \basic-proof of the form
\[
  \infer[\decide]
        {\twoseq{}{\Upsilon,\Delta}}{
         \infer{\twoseq{\Gamma}{\Upsilon,\Delta}}
               {\strut\cdots\qquad & 
                \deduce{\vdots}{
                \infer[\debit_1]
                      {\twoseq{A}{\Upsilon}}
                      {\deduce{\twoseq{}{\ngg{A},\Upsilon}}{\strut\Xi}}}
                       &
                       \cdots}}
\]
with the following proof with $\dclip_2$ below.  Here, we replaced
$\Upsilon$ with $A$ in some of the sequents and used the $\dclip_2$
rule to reintroduce the $\Upsilon$ expressions.
\[
  \infer[\dclip_2]{\twoseq{}{\Upsilon,\Delta}}{
  \infer[\decide]
        {\twoseq{}{A,\Delta}}{
         \infer{\twoseq{\Gamma}{A,\Delta}}
               {\strut\cdots\qquad & 
                \deduce{\vdots}{
                \infer[\init]
                      {\twoseq{A}{A}}
                      {}}
                       &
                       \cdots}}
         &
         \deduce{\twoseq{}{\ngg{A},\Upsilon}}{\strut\Xi}}
\]
Thus, we have replaced one occurrence of either $\debit_1$ or
$\debit_2$ with one occurrence of $\dclip_1$ or $\dclip_2$,
respectively.  Using Proposition~\ref{prop:clip debt}, we have a
clip-free proof with one fewer debit rules.  Note that clip
elimination does not introduce \debit when there is no \debit in the
original proof.
\end{proof}

A \basic-proof $\Xi$ is \emph{reduced} if every occurrence of
the \decide rule has a right-hand side containing only atomic
expressions or debts. 

The \emph{major premises} of the left rules are defined as follows.  Those
rules with only a single premise 
have that sole premise as their major premise.  Both premises of the
left-introduction rule for $\Times$ are major premises.
Finally, the left-most premise is the major premise for the
introduction rules for $\mapsto$ and $\Mapsto$.
Note that if the right-hand
side of the conclusion of a left rule occurrence contains only atomic
expressions, then this is true of the major premises of that rule
occurrence.

\begin{prop}[Completeness of reduced \basic-proofs]
\label{prop:reduced ps3}
If the sequent $\twoseq{}{\Delta}$ has a \basic-proof, it has a
reduced proof.
\end{prop}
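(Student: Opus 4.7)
The plan is to iteratively eliminate \decide occurrences whose conclusions have non-atomic, non-debt expressions in the right-hand side, using the invertibility of right rules (Proposition~\ref{prop:invertible ps3}) as the main tool. At each step, I would pick a \decide occurrence in $\Xi$ whose conclusion $\twoseq{}{\Delta'}$ contains a non-atomic, non-debt expression $E$. Let $\Xi_0$ be the subproof of $\Xi$ rooted at this \decide, so $\Xi_0$ itself is a \basic-proof of $\twoseq{}{\Delta'}$. Applying Proposition~\ref{prop:invertible ps3} to $\Xi_0$ yields another \basic-proof $\Xi_0'$ of $\twoseq{}{\Delta'}$ whose last rule is the right-introduction of $E$. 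Splicing $\Xi_0'$ back into $\Xi$ in place of $\Xi_0$ gives a new \basic-proof of $\twoseq{}{\Delta}$ in which the offending \decide is replaced, at that position, by a right rule whose premises conclude sequents with one fewer top-level connective than $\Delta'$.

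To argue termination of this rewriting, I would use a well-founded measure on proofs---for instance, the multiset (ordered by multiset extension over the naturals) of the connective counts of the right-hand sides of the conclusions of all \decide occurrences in the proof. A proof whose measure is the all-zeros multiset is precisely a reduced proof. The local transformation above should strictly reduce this measure: the targeted \decide contributes a positive count that disappears, and any \decide occurrences that appear within $\Xi_0'$ are constrained to have conclusions whose right-hand sides have strictly smaller connective count, since they now sit above a right rule that has decomposed $E$ into strictly simpler subexpressions.

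The main obstacle is making this last claim precise. Proposition~\ref{prop:invertible ps3} is itself proved via clip admissibility (Proposition~\ref{prop:clip ps3}), and clip-elimination can reshape a subproof considerably, potentially relocating existing \decide occurrences or creating new ones with new conclusions. The hard step is verifying that every \decide that appears in $\Xi_0'$ has a conclusion whose right-hand side has strictly fewer connectives than $\Delta'$, so that the well-founded measure genuinely decreases. I expect this to follow from a careful inspection of the cases in the proof of Proposition~\ref{prop:clip ps3}, observing that as the clip is permuted upward it only redistributes already-present \decide applications, and the cut-formula being eliminated is always a strict subexpression of $E$; no manipulation can inflate a right-hand side of a \decide conclusion beyond what the initial decomposition of $E$ allows.
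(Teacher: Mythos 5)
Your overall strategy---repeatedly use the invertibility of the right rules (Proposition~\ref{prop:invertible ps3}) to push a right-introduction below an offending \decide, and terminate via a multiset measure---is a genuinely different route from the paper, which never invokes invertibility here. The paper instead defines a ``bad'' sequent occurrence (the conclusion of a left rule whose major premise is the conclusion of a right rule), measures a proof by the multiset of the \emph{heights} of the subproofs rooted at bad occurrences, and decreases this measure by purely local permutations of one right rule below one left rule (checking the finitely many left/right rule combinations). That argument is self-contained: each permutation step is an explicit, local rewriting whose effect on the measure can be read off directly. Your argument instead delegates the real work to Proposition~\ref{prop:invertible ps3}, and hence to the clip-elimination procedure of Proposition~\ref{prop:clip ps3}, and this is exactly where it breaks down.

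The gap is the one you flag yourself, and your proposed repair is the wrong invariant. It is simply false that every \decide occurring in $\Xi_0'$ has a conclusion whose right-hand side has strictly fewer connectives than $\Delta'$: a \decide deep inside $\Xi_0$ may sit above a left rule for $\mapsto$ or $\Mapsto$ that has injected a large expression $E'$ into the right-hand side, so $\Xi_0$ (and hence $\Xi_0'$) can contain \decide occurrences whose conclusions are arbitrarily larger than $\Delta'$; these are untouched by the decomposition of $E$. What you actually need is a comparison, in the multiset order, between the \emph{entire} multiset of \decide-conclusion sizes of $\Xi_0$ and that of $\Xi_0'$: roughly, that clip elimination in this particular instance never increases the right-hand-side size of any surviving \decide, strictly decreases the one at the root, and never duplicates a \decide whose size fails to be dominated. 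None of this is established, and it is not automatic: the clip-elimination procedure does duplicate subproofs (e.g., the case $\bias{E}=\pm2$, where $\Xi_3$ is used in both branches), and its preprocessing step (``permute right rules down so that $\Delta_1,\Delta_2$ are atomic'') is itself essentially the permutation lemma you are trying to prove. So as written the termination argument is circular in spirit and unproven in substance. To salvage your route you would have to prove a quantitative strengthening of Proposition~\ref{prop:clip ps3} tracking \decide occurrences through every case of the elimination; the paper's direct permutation argument avoids this entirely and is the more economical choice.
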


\begin{proof}
An occurrence of a sequent in $\Xi$ is \emph{bad} if that sequent is
the conclusion of a left rule and a major premise of that rule is the
conclusion of a right-introduction rule.  Note that the right-hand
side of a bad sequent occurrence must contain a non-atomic expression.
The measure of a bad occurrence of a sequent is the height of its
subproof in $\Xi$.  The measure of the \basic-proof $\Xi$ is the
multiset of the measure of all bad sequents in $\Xi$.  We prove that
if the measure of $\Xi$ is not the empty multiset, then we can replace
$\Xi$ with another proof of the same end-sequent but with strictly
smaller multiset ordering.

Assume that the measure of $\Xi$ is non-empty.  Then there exists a
sequent with a bad occurrence in $\Xi$.  Pick one of these with
minimal height and assume that that sequent is of the form
$\twoseq{\Gamma}{\Delta}$.  As noted above, there must be a non-atomic
expression in $\Delta$.  Hence, the last left rule cannot be either
$\debit_2$ or the left-introduction rule for $\One$.  Thus, we only
need to consider six left rules (\decide, $\debit_1$, and one each for
$\Times, \Plus, \mapsto, \Mapsto$).  Since there are four
right introduction rules (one for each of $\One, \Times, \Zero,
\Plus$) then we have 24 possible combinations of rules that can yield
the bad occurrence $\twoseq{\Gamma}{\Delta}$.  If the upper rule is
the right-introduction of $\Zero$ or $\One$, then we can trivially
permute that rule down.  We illustrate a few more cases.  The
remaining ones are similar.
\[
\vcenter{
  \infer{\twoseq{R_1\times R_2}{E_1\Plus E_2,\Delta_1,\Delta_2}}{
  \infer{\twoseq{R_1}{E_1\Plus E_2,\Delta_1}}{
      \deduce{\twoseq{R_1}{E_1,\Delta_1}}{\Xi_1} &
      \deduce{\twoseq{R_1}{E_2,\Delta_1}}{\Xi_2}} &
  \deduce{\twoseq{R_2}{\Delta_3}}{\Xi_3}}}\quad
    \longrightarrow
\]
\[
  \infer{\twoseq{R_1\times R_2}{E_1\Plus E_2,\Delta_1,\Delta_2}}{
  \infer{\twoseq{R_1\times R_2}{E_1,\Delta_1,\Delta_2}}{
      \deduce{\twoseq{R_1}{E_1,\Delta_1}}{\Xi_1} &
      \deduce{\twoseq{R_2}{\Delta_2}}{\Xi_3}} &   
  \infer{\twoseq{R_1\times R_2}{E_2,\Delta_1,\Delta_2}}{
      \deduce{\twoseq{R_1}{E_2,\Delta_1}}{\Xi_2} &
      \deduce{\twoseq{R_2}{\Delta_2}}{\Xi_3}}}
\]
\[
\vcenter{
  \infer{\twoseq{R\mapsto E}{E_1\Times E_2,\Delta_1,\Delta_2}}{
      \infer{\twoseq{R}{E_1\Times E_2,\Delta_1}}
            {\deduce{\twoseq{R}{E_1,E_2,\Delta_1}}{\Xi_1}} &
      \deduce{\twoseq{}{E,\Delta_2}}{\Xi_2}}}\quad
    \longrightarrow
\]
\[
  \infer{\twoseq{R\mapsto E}{E_1\Times E_2,\Delta_1,\Delta_2}}{
  \infer{\twoseq{R\mapsto E}{E_1,E_2,\Delta_1,\Delta_2}}{
    \deduce{\twoseq{R}{E_1,E_2,\Delta_1}}{\Xi_1} &
    \deduce{\twoseq{}{E,\Delta_2}}{\Xi_2}}}
\]
(We have assumed that these sequents have a left-hand side with at
most two expressions: these cases are easily extended to the more
general case.)  Note that in the last pair of proofs, for example, the
bad occurrence of the sequent is moved up, but the sequent
$\twoseq{R\mapsto E}{E_1,E_2,\Delta_1}$ may be a bad occurrence in the
result: if that is the case, its measure has decreased.  In this way,
the measure decreases whenever we permute such rules.
\end{proof}

Since the \decide rule in \basic allows for deciding on $\Gamma$ with
multiple expressions, we say that \basic-proofs are, in general,
\emph{multi-decide} proofs.  A \basic-proof is a \emph{single-decide}
proof if every occurrence of the \decide rule in it decides on exactly
one expression.  While allowing multi-decide proofs was a convenience
for proving the clip-elimination result (Proposition~\ref{prop:clip
  ps3}), we maintain completeness by restricting to single-decide
proofs.

\begin{prop}[Completeness of single-decide proofs]
\label{prop:completeness single decide}
A \basic provable sequent has single-decide \basic-proof.
\end{prop}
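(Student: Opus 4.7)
I plan to proceed by induction on the \emph{multi-decide measure} $\mu(\Xi) = \sum_{\decide\in\Xi}(|\Gamma|-1)$, summed over all \decide occurrences in the \basic-proof $\Xi$. When $\mu(\Xi) = 0$ every decide is single; otherwise it suffices to replace one multi-decide occurrence by single-decides (plus possibly a strictly smaller multi-decide), strictly decreasing $\mu$.

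Fix a multi-decide occurrence
\[
  \infer[\decide]{\twoseq{}{\Delta}}
                 {\deduce{\twoseq{R_1,\ldots,R_n}{\Delta}}{\Psi}}
\]
with $n \geq 2$. The plan is to extract one of the rules --- say $R_n$ --- into a single \decide nested inside the proof, leaving at the root a multi-decide on $\{R_1,\ldots,R_{n-1}\}$; iterating this reduces the root decide to a single \decide on $R_1$.

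The extraction rests on the multiplicative character of the left rules of \basic. In $\Psi$, the rule $R_n$ propagates independently of the other $R_j$: at each multiplicative split (the left rules for $\Times$, $\mapsto$, and $\Mapsto$) it stays in one of the two branches, and eventually its descendants are either consumed by an identity or debit rule, or they disappear as we reach a right premise of $\mapsto L$ or $\Mapsto L$ whose left context is empty. By Proposition~\ref{prop:reduced ps3}, I may assume $\Psi$ is reduced, which constrains its shape. The modified proof follows the same rule structure as $\Psi$ but with the descendants of $R_n$ deleted from the left contexts throughout; a single \decide on $R_n$ is inserted at the point in $\Psi$ immediately preceding the consumption of $R_n$'s last descendant. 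At that insertion point, the modified left context is empty, so \decide validly applies, and the rest of the proof above proceeds as in $\Psi$.

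The main technical point is to ensure that the modified left context is indeed empty at the insertion site. This is automatic when the consuming rule is \init, \initd, the left rule for $\One$, or $\debit_2$, since each of these requires a singleton left context (namely the $R_n$-descendant being consumed), and likewise when the consumption occurs on a right premise of $\mapsto L$ or $\Mapsto L$ whose left context is empty by design. The remaining case is $\debit_1$ applied to an atomic $R_n$-descendant while other $R_j$-descendants are still present; here $\debit_1$ can be permuted toward the root past the intervening left and right rules (an operation of the same nature as the permutations used in the proof of Proposition~\ref{prop:reduced ps3}) until the sub-sequent just below it has only the $R_n$-descendant on the left, at which point the modified sub-sequent is empty-left and admits the desired \decide on $R_n$.
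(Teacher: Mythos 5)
Your plan inverts the paper's decomposition: the paper keeps \emph{one} rule $R$ at the root \decide and pushes the remaining multiset $\Gamma$ upward, re-introducing each residue $\Gamma_i$ by a fresh \decide at a premise $\twoseq{\Gamma_i}{\Delta_i}$ of $R$'s focused decomposition (where, by construction, the left context below the new \decide has been emptied); you instead keep $n-1$ rules at the root and try to hoist the single rule $R_n$ to one interior \decide. That asymmetry is where the argument breaks. The step you need --- ``delete the descendants of $R_n$ from the left contexts throughout, following the same rule structure, and insert one \decide on $R_n$ at the point immediately preceding the consumption of $R_n$'s last descendant'' --- is not a well-defined proof transformation. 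Unless $R_n$ is built only from atoms, $\Plus$, $\Zero$, and $\One$, its left-side decomposition in $\Psi$ is a \emph{branching subtree}: a $\Times$, $\mapsto$, or $\Mapsto$ principal in (a descendant of) $R_n$ splits both the other rules $R_1,\dots,R_{n-1}$ and the right context between two premises, and the atomic descendants of $R_n$ are then consumed at several incomparable leaves. So there is no single ``last descendant'' at which to insert the \decide, and erasing a branching left rule principal in $R_n$ while ``keeping the same rule structure'' would require merging two subproofs with disjoint contexts into one sequent, which is impossible in general.

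The problem persists even when $R_n$ has a single left-hand descendant chain, say $R_n = R\mapsto E$: hoisting its $\mapsto L$ up to a site whose left context is otherwise empty detaches its non-major premise $\twoseq{\Gamma_2}{E,\Delta_2}$ from the material it consumed at the original location --- the portion $\Delta_2$ of the right context, and any descendants of the other $R_j$ that $\Psi$ routed into $\Gamma_2$, have by then been split off into other branches and are unavailable at the new site. Your $\debit_1$ permutation handles only one leaf case and does not touch this structural interleaving; Proposition~\ref{prop:reduced ps3} constrains only the right-hand sides of \decide conclusions and does not localize $R_n$'s decomposition either. The measure $\mu$ and the outer induction are fine, but the single extraction step they rest on is not established. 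The natural repair is essentially to reverse the roles, as the paper does: first permute the left rules so that one chosen rule is decomposed in a focused block at the bottom (this is where ``all left rules permute over each other'' is actually used), keep that rule alone in the root \decide, and re-decide on the leftover sub-multisets $\Gamma_i$ at the premises of that block, where the empty-left-context side condition of \decide holds by construction; induction on the sizes of the remaining multi-decides then finishes the proof.
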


\begin{proof}
In principle, deciding on multiple expressions can be done sequentially.
Since all left rules permute over each other, we can assume that the
left rules are done in a \emph{focused} manner: that is, the immediate
subexpressions of an $R$-expressions in major premises can be
introduced in the proof of that major premise (we also include the use
of \init or a \debit rule).    Schematically, we can then take
instances of the \decide rule of the form 
\[
  \infer[\decide]{\twoseq{}{\Delta}}{
  \infer{\twoseq{R,\Gamma}{\Delta}}{\strut\cdots & 
         \infer{\vdots}{\twoseq{\Gamma_i}{\Delta_i}} & 
         \cdots}}
\]
where $\Gamma$ is non-empty and $\Gamma_i$ is a sub-multiset of
$\Gamma$ and where $i\in\{1,\ldots,n\}$ for some positive $n$.  If
$\Gamma_i$ is non-empty, then we can transform this proof into
\[
  \infer[\decide]{\twoseq{}{\Delta}}{
  \infer{\twoseq{R}{\Delta}}
        {\deduce{\strut\cdots}{\deduce{\vdots}{\cdots}}\qquad & 
         \infer{\vdots}{\infer[\decide]{\twoseq{}{\Delta_i}}
                                     {\twoseq{\Gamma_i}{\Delta_i}}} & 
         \deduce{\strut\cdots}{\deduce{\vdots}{\cdots}}}}
\]
An inductive argument can be used to remove all \decide rules that
decide on more than one rule.
\end{proof}

\section{The two-phase inference system \twophase}
\label{sec:two phase}

The \basic inference system supports the basic features we
motivated in Section~\ref{sec:motivated} that should be present in an
inference system that supports the search for proofs.  At the same
time, \basic can be improved significantly to better support such
search.

In the previous section, we have taken steps in that direction
already.  The completeness of single-decide proofs means that we do
not need to consider selecting collections of rules at a time because
selecting them one at a time is just as complete.  Similarly, the
completeness of reduced proofs implies that the search for proofs can
be done by first doing all possible right rules, then selecting one
$R$-expression for the \decide rule, and then doing only left rules
along the major premises.

There are, however, still defects in the search for proofs since there
remains some non-determinism in the search for (reduced and
single-decide) \basic-proofs that can be removed.  For example,
$\twoseq{A}{A}$ can be proved using \init, but, if $\bias{A}=+1$, it
can also be proved using both $\debit_1$ and $\initd$.  Also, the
rules of contraction and weakening can be applied at almost any moment
during search.  

The two-phased proof system in \twophase, given in \fref{f3},
captures only reduced and single-decide proofs and
where these two non-deterministic choices are resolved.  There are two
kinds of sequents in \twophase, namely $\twoseq{}{\Delta}$ and
$\match{R}{\Atoms}$, where $\Delta$ is a multiset of $E$-expressions,
$R$ is an $R$-expression, and $\Atoms$ is a multiset of atomic
expressions and debts.  When comparing this proof system to \basic, there is a
clear separation on left and right rules. A sequent of the form
$\twoseq{}{\Atoms}$ is called a \emph{border} sequent.

Note that in \twophase, if $\bias{A}=+1$ and we encounter
$\match{A}{A}$, then only the $\debit_1$ and $\initd$ rules can be
used to prove it: the \initL rule is not available.  Also, the two
structural rules are built into this proof system using the schematic
variable $\Supers$ to denote a multiset of atomic expressions in the
classical realm: this is achieved by treating the part of the context
identified as $\Supers$ as additive even in multiplicative rules.

\begin{figure}
\[
  \infer{\twoseq{}{\Zero,\Delta}}{}
  \qquad
  \infer{\twoseq{}{E_1\Plus E_2,\Delta}}
        {\twoseq{}{E_1,\Delta} &
         \twoseq{}{E_2,\Delta}}
\]\[
  \infer{\twoseq{}{\One,\Delta}}
        {\twoseq{}{\Delta}}
  \qquad
  \infer{\twoseq{}{E_1\Times E_2,\Delta}}
        {\twoseq{}{E_1,E_2,\Delta}}
\]\[
  \infer[\decide,~ R\in\Rules]
        {\twoseq{}{\Atoms,\Supers}}
        {\match{R}{\Atoms,\Supers}}
  \qquad
  \infer{\match{R_1\Plus R_2}{\Atoms,\Supers}}
        {\match{R_i}{\Atoms,\Supers}}
\]\[
  \infer{\match{\One}{\Supers}}{}
  \qquad
  \infer{\match{R_1\Times R_2}{\Atoms_1, \Atoms_2,\Supers}}
        {\match{R_1}{\Atoms_1,\Supers} & 
         \match{R_2}{\Atoms_2,\Supers}}
\]\[
  \infer{\match{R\mapsto E}{\Atoms_1,\Atoms_2,\Supers}}
        {\match{R}{\Atoms_1,\Supers} & 
         \Ratch{}{E}{\Atoms_2,\Supers}}
  \qquad
  \infer{\match{R\Mapsto E}{\Atoms,\Supers}}
        {\match{R}{\Supers} & 
         \Ratch{}{E}{\Atoms,\Supers}}
\]\[
  \infer[\release\dag]{\Ratch{}{E}{\Atoms,\Supers}}
        {\twoseq{}{E,\Atoms,\Supers}}
  \qquad
  \infer[\initR]{\Ratch{}{A}{\ngg{A},\Supers}}{\bias{A}>0}
\]\[
  \infer[\initL]{\match{A}{A,\Supers}}{\bias{A}<0}
  \qquad
  \infer[\initd]{\twoseq{}{\ngg{A},A,\Supers}}{\quad\delta(A)>0}
\]\[
  \infer[\debit_1]{\match{A}{\Atoms,\Supers}}
                  {\twoseq{}{\ngg{A},\Atoms,\Supers}\quad\delta(A)=+1}
\]\[
  \infer[\debit_2]{\match{A}{\Supers}}
                  {\twoseq{}{\bar{A},\Supers}\quad\bias{A}=+2}
\]
\caption{The two-phase inference system \twophase.  The proviso $\dag$ for
  the \release rule states that $E$ is either not atomic or it is
  atomic and $\bias{E}<0$.}
\label{fig:f3}
\end{figure}

The proof of the following relative completeness theorem for \twophase
proofs follows from the completeness for reduced and
single-decide proofs (Propositions~\ref{prop:reduced ps3}
and~\ref{prop:completeness single decide}).

\begin{prop}
\label{prop:completeness3}
Let $\Delta$ be a multiset containing $E$-expressions and debits.
Then, $\vdash\Delta$ is provable in \basic if and only if
$\vdash\Delta$ is provable in \twophase.
\end{prop}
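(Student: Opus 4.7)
The plan is to prove the two directions separately. The direction from \twophase to \basic is a direct simulation: every \twophase-rule is emulated by a (usually trivial) derivation in \basic. The direction from \basic to \twophase uses Propositions~\ref{prop:reduced ps3} and~\ref{prop:completeness single decide} (together with Proposition~\ref{prop:ps3 atomically closed}) to first normalize a \basic-proof into one whose structure already matches the phased discipline of \twophase.

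For the simulation from \twophase to \basic, I translate sequents by sending $\twoseq{}{\Delta}$ to itself, $\match{R}{\Atoms,\Supers}$ to $\twoseq{R}{\Atoms,\Supers}$, and $\Ratch{}{E}{\Atoms,\Supers}$ to $\twoseq{}{E,\Atoms,\Supers}$, and then check each \twophase-rule. The right-introductions, the \decide rule, and the left-introductions for $\Plus$, $\Times$, $\mapsto$, and $\Mapsto$ map to the identically-named \basic-rules; the rules $\debit_1$, $\debit_2$, and \initd map directly; \release becomes a no-op at the level of sequents; \initL becomes \init, possibly padded by \weakR to reintroduce the $\Supers$ context; and \initR is simulated by \initd preceded by $\debit_1$. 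The additive handling of $\Supers$ in \twophase is admissible in \basic thanks to the implicit use of \contrR and \weakR on $\pm 2$-biased atoms noted just after Proposition~\ref{prop:ps3 atomically closed}.

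For the completeness of \twophase, I start with any \basic-proof of $\twoseq{}{\Delta}$ and apply Propositions~\ref{prop:ps3 atomically closed}, \ref{prop:reduced ps3}, and~\ref{prop:completeness single decide} in sequence to obtain an atomically-closed, reduced, single-decide proof. Such a proof already exhibits the two-phase structure: between \decide occurrences only right rules are applied, decomposing the right-hand side until it contains only atoms and debts (a border sequent); each \decide then selects exactly one $R$-expression; and the ensuing left-rule phase walks down the major premises of that expression, switching back to the right phase exactly at the non-major premises of the $\mapsto$- and $\Mapsto$-left rules. Reading this normalized proof off through the inverse of the translation above produces the desired \twophase-derivation.

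The main obstacle will be the handling of atomic leaves on positive-bias atoms, where \twophase is stricter than \basic. An \init occurrence $\twoseq{A}{A}$ in the normalized \basic-proof with $\bias{A}=+1$ cannot be translated by \initL (which insists on $\bias{A}<0$); I must instead replace it by the small subderivation that uses $\debit_1$ and \initd, matching the \twophase-leaf $\match{A}{A}$ proved by the same pair of rules. Likewise, at a minor premise $\twoseq{}{E,\Delta_2}$ of a $\mapsto$- or $\Mapsto$-left rule where $E$ is atomic with positive bias, the proviso on \release forbids simply releasing; here I commit instead to \initR when $\ngg{E}$ is already present, or otherwise first apply \release and then resolve $E$ higher up via $\debit_1$ and \initd. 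Verifying that these local rewrites are always available in the normalized \basic-proof completes the argument.
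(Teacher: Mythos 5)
Your proposal takes essentially the same route as the paper, which gives no detailed argument and states only that the result follows from the completeness of reduced and single-decide proofs (Propositions~\ref{prop:reduced ps3} and~\ref{prop:completeness single decide})---precisely the normalization you perform for the hard direction. The extra detail you supply (the rule-by-rule simulation of \twophase in \basic for soundness, and the repair of positive-bias atomic leaves where \initL and \release are unavailable) is consistent with the definitions and fills in what the paper leaves implicit.
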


The proof system \twophase is a \emph{two-phase} proof system since
all of its inference rules can be organized into the following two
phases. 
A \emph{left phase} is a derivation composed of only left rules and
$\Downarrow$ sequents: this phase has a border sequent as its
conclusion, and its premises are the conclusion of either \release,
$\debit_1$, or $\debit_2$.  There are possibly many choices to make
during the construction of a left phase (the choice of $R\in\Rules$,
the choice of $i$ in the left rule for $\Plus$, and the choice of how
to split the side expressions among premises) and, as a result, this
phase encapsulates \emph{don't know non-determinism}.  A \emph{right
phase} is a derivation composed of only right rules: all of the
premises of this phase are border sequents, and its conclusion is
either the conclusion of the full proof or is the premise of either
\release, $\debit_1$, or $\debit_2$.  Note that there might be many
ways to build a right phase formally but they all relate their
conclusion to the same collection of premises.  In this sense, right
phases encapsulate \emph{don't care non-determinism}.

A \emph{synthetic rule} is composed of one left phase and zero or more
right phases, one for each premise of the left phase.  In particular,
the conclusion and all the premises of a synthetic rule are border
sequents.  We say that a synthetic rule is \emph{for} $R$ if the last
rule (necessarily a \decide rule) decides on $R$.

Note that the right rules and, hence, the right phase seen as a single
rule, is additive (see Section~\ref{sec:motivated}).  If there are no
atomic expressions with bias assignment $\pm 2$ then the left rules,
and the left phase seen as a single rule, are multiplicative.  If
atomic expressions have bias $\pm 2$ then these are treated additively
even in otherwise multiplicative rules.

The primary purpose of the \twophase proof system over the \basic
proof system is that the former is used to generate synthetic
inference rules from $R$ expressions.  In the next section, we provide
several illustrations of how $R$ expressions can be used to specify
various proof systems involving logical formulas.

\section{Applications of \basic and \twophase}

\subsection{Specifying Fibonacci numbers}

Denote by $\fib{n}$ the $n^{th}$ Fibonacci number and let $\Rules$ be
the union of $\{ \Fib{0}{0},~\Fib{1}{1}\}$ and the set
\[ 
  \{\Fib{n+2}{x+y}\mapsto\Fib{n+1}{x}\mapsto \Fib{n}{y}\kern
  -0.5pt\sep n,x,y\in\mathbb{N} \}.
\]
To determine the synthetic rules that can arise from $\Rules$,
consider the three cases for the value of $\bias{\Fib{\cdot}{\cdot}}$.

If $\bias{\Fib{\cdot}{\cdot}}<0$ then the synthetic rules are
\[
  \infer{\twoseq{}{\Fib{0}{0}}}{} \qquad
  \infer{\twoseq{}{\Fib{1}{1}}}{} \qquad
  \infer{\twoseq{}{\Fib{n+2}{x+y}}}{\twoseq{}{\Fib{n+1}{x}} &
    \twoseq{}{\Fib{n}{y}}}
\]
The sequent $\twoseq{}{\Fib{n}{\fib{n}}}$ is has a unique 
proof using these rules, and its size is exponential in $n$.

If $\bias{\Fib{\cdot}{\cdot}}=+1$: then the synthetic rules are
\[
  \infer{\twoseq{}{\Delta}}{\twoseq{}{\Delta,\ngg{\Fib{0}{0}}}}
  \quad\kern -2pt
  \infer{\twoseq{}{\Delta}}{\twoseq{}{\Delta,\ngg{\Fib{1}{1}}}} \quad
  \infer{\twoseq{}{\ngg{\Fib{n+1}{x}}, \ngg{\Fib{n}{y}},\Delta}}
        {\twoseq{}{\ngg{\Fib{n+2}{x+y}},\Delta}}
\]
The sequent $\twoseq{}{\Fib{n}{\fib{n}}}$ is provable and 
the sizes of such proofs are exponential in $n$.  While bottom-up
reasoning is taking place, contraction is not available on debts.  As
a result, there is no sharing of previous computations. 

Finally, if $\bias{\Fib{\cdot}{\cdot}}=+2$, then the synthetic rules
are the same as the previous case except that $\Delta$ must be
replaced with $\Upsilon$.
  \[
    \infer{\twoseq{}{\Upsilon}}{\twoseq{}{\Upsilon,\ngg{\Fib{0}{0}}}}
    \quad
    \infer{\twoseq{}{\Upsilon}}{\twoseq{}{\Upsilon,\ngg{\Fib{1}{1}}}}
    \quad \infer{\twoseq{}{\ngg{\Fib{n+1}{x}},
        \ngg{\Fib{n}{y}},\Upsilon}}
          {\twoseq{}{\ngg{\Fib{n+2}{x+y}},\Upsilon}}
  \]
The sequent $\twoseq{}{\Fib{n}{\fib{n}}}$ is provable only when $n\le 3$.

Another specification of the Fibonacci series uses a more deliberate
reuse strategy.  Let $n\ge 0$ and let $\Rules_n$ be the
set of rules that is the union of the singleton
$\{\Fib{n}{\fib{n}}\mapsto \Zero\}$ and all the rules of the form
\[
  \Fib{m+1}{x}\Times\Fib{m}{y}\mapsto\Fib{m+2}{x+y}\Times\Fib{m+1}{x}
\]
where $m,x,y$ are natural numbers.  In this case, the sequent
$\twoseq{}{\Fib{0}{0}\Times\Fib{1}{1}}$ is provable from $\Rules_n$
with a proof of size linear in $n$.

\subsection{Classical and intuitionistic logic}
\label{ssec:classical}

The main reason to introduce \framework, via the \basic and \twophase
proof systems, is to provide a specification framework for inference
rules.  When comparing different proof systems (e.g., a target and an
encoding of it), three \emph{levels of adequacy} naturally
arise~\cite{nigam10jar}.  The weakest level of adequacy is
\emph{relative completeness}, which considers only {\em provability}:
a formula has a proof in one system if it has a proof in the other
system.  A stronger level of adequacy is of \emph{full completeness of
proofs}: the proofs in one system are in one-to-one correspondence
with proofs in the other system.  If one uses the term ``derivation''
for possibly incomplete proofs (proofs that may have open premises),
then the strongest version of adequacy is that of \emph{full
completeness of derivations}, where every derivation (such as
inference rules themselves) are in one-to-one correspondence with
those in the other system.

Unless otherwise mentioned, the encodings of proof systems described
below will all be at the highest level of adequacy.  In particular,
one inference rule in a target proof system (say, a rule in natural
deduction) will correspond to a synthetic rule in \twophase.

\begin{figure}
$$\begin{array}{c@{\quad}r@{\ }l@{\qquad}c@{\quad}r@{\ }l} (\iimpLm)
    &\lft{A\iimp B} &\mapsto\ \rght{A} \Mapsto \lft{B}\\ (\iimpRm)
    &\rght{A\iimp B} &\mapsto\ \lft{A}\Times\rght{B}\\ (\wedgeLa)
    &\lft{A\wedge B} &\mapsto\ \lft{A}\\ (\wedgeRa) &\rght{A\wedge B}
    &\mapsto\ \rght{A} \Plus \rght{B}\\ (\wedgeLa) &\lft{A\wedge B}
    &\mapsto\ \lft{B}&\\ (\veeLa) &\lft{A\vee B}
    &\mapsto\ \lft{A}\Plus\lft{B}\\ (\veeRa) &\rght{A\vee B}
    &\mapsto\ \rght{A}\\ (\veeRa) &\rght{A\vee B} &\mapsto\ \rght B\\
(\ifalseLa) &\lft{\ifalse} &\mapsto\ \Zero\\ (\itrueRa) &\rght{\itrue}
    &\mapsto\ \Zero\\ (Id_1) &\lft{C}\Times\rght{C} & \\ (Id_2) &\One
    &\mapsto\rght{C}\Mapsto\lft{C}
\end{array}$$
  \caption{Rules used to specify classical and intuitionistic logic.
    The superscript $a$ and $m$ on the names associated to
    $R$-expressions identify that rule as either additive or
    multiplicative.}
\label{fig:ljdef}
\end{figure}

Recalling now the discussion in Section~\ref{sec:motivated} regarding
representing the state of the search for a proof as a collection of
sheets, these sheets are represented as multisets of atomic
expressions of the form $\lft{B}$ and $\rght{B}$, where $B$ denotes a
logical formula.  Here, the expression $\lft{B}$ tags $B$ as coming at
the top of a sheet while $\rght{B}$ tags $B$ as coming at the bottom
of a sheet.

The rules in \fref{ljdef} can be used to describe natural deduction in
intuitionistic logic \emph{and} the sequent calculus for both
intuitionistic and classical logic.  In all of these cases, classical
logic is captured using the polarities $\bias{\lft{\cdot}}=\pm 2$ and
$\bias{\rght{\cdot}}=\pm 2$.  In contrast, intuitionistic logic is
captured using the polarities $\bias{\lft{\cdot}}=\pm 2$ and
$\bias{\rght{\cdot}}=\pm 1$.  Here, we are considering only
propositional logic with the logical constants $\iimp$ (implication),
$\wedge$ (conjunction), $\vee$ (disjunction), $\itrue$ (truth), and
$\ifalse$ (false). (First-order quantification is addressed in
Section~\ref{ssec:quantication}.)

\subsection{Natural deduction for intuitionistic logic}
\label{ssec:nd}

If we set $\bias{\lft{\cdot}}=+2$ and $\bias{\rght{\cdot}}=-1$, then
the synthetic rules derived in \twophase for the rule expressions in
\fref{ljdef} describe natural deduction proofs in intuitionistic
logic.  To prove this claim, we take the rules in \fref{natded} as the
formal definition of natural deduction~\cite{sieg98sl}.

\begin{figure}
\[
\infer[{[\iimp E]}]{\Gamma \vdash B\downarrow} {\Gamma \vdash A\iimp
  B\downarrow & \Gamma \vdash A\uparrow} \qquad \infer[{[\iimp
      I]}]{\Gamma \vdash A\iimp B \uparrow} {\Gamma,A \vdash B
  \uparrow}
\]
\[
\infer[{[\wedge E]}]{\Gamma \vdash A\downarrow} {\Gamma \vdash A
  \wedge B \downarrow} \qquad \infer[{[\wedge E]}]{\Gamma \vdash
  B\downarrow} {\Gamma \vdash A \wedge B \downarrow}
\]\[
\infer[{[\wedge I]}]{\Gamma \vdash A\wedge B \uparrow} {\Gamma \vdash
  A \uparrow & \Gamma \vdash B \uparrow}
\]
\[
\infer[{[\textrm{I}]}]{\Gamma,A \vdash A\downarrow}{} \qquad
\infer[{[\textrm{M}]}]{\Gamma \vdash A \uparrow} {\Gamma \vdash A
  \downarrow} \qquad \infer[{[\textrm{S}]}]{\Gamma \vdash A
  \downarrow} {\Gamma \vdash A \uparrow}
\]\[
\infer[{[\itrue I ]}]{\Gamma \vdash \itrue \uparrow }{} \qquad
\infer[{[\ifalse E ]}]{\Gamma \vdash C \uparrow }{\Gamma \vdash
  \ifalse \downarrow}
\]
\caption{The rules for the $\iimp$, $\iforall$, and $\wedge$ fragment
  of intuitionistic natural deduction NJ.}
\label{fig:natded}
\[
\infer[{[\vee E]}]{\Gamma \vdash C~\mathord\uparrow (\downarrow)}
      {\Gamma \vdash A\vee B\downarrow\ \Gamma,A \vdash
        C~\mathord\uparrow (\downarrow)\ \Gamma,B \vdash
        C~\mathord\uparrow (\downarrow)}
\]\[
\infer[{[\vee I]}]{\Gamma \vdash A_1\vee A_2 \uparrow} {\Gamma \vdash
  A_i \uparrow}
\]
\caption{The rules for $\vee$ for intuitionistic natural deduction. In
  $[\vee L]$, $i\in\{1,2\}$.}
\label{fig:natded2}
\end{figure}

Let $\Gamma\cup\{C\}$ be a set of propositional formulas and assume
that all $\bias{\lft{\cdot}}=+2$ and $\bias{\rght{\cdot}}=-1$.
The two judgments in \fref{natded} will be encoded as follows.  The
up-arrow judgment $\Gamma\vdash C\uparrow$ is encoded using
$\twoseq{}{\lft{\Gamma},\rght{C}}$.  The down-arrow judgment
$\Gamma\vdash C\downarrow$ is encode using
$\twoseq{}{\lft{\Gamma},\ngg{\lft{C}}}$.

Consider, for example, the following derivation using
the $(\iimpLm)$ rule in \fref{ljdef}.
\[
  \infer[1]{\twoseq{}{\ngg{\lft{B}},\Upsilon}}{
  \infer{\match{(\lft{A\iimp B}\mapsto \rght{A})\Mapsto 
                        \lft{B}}{\ngg{\lft{B}},\Upsilon}}{
    \infer{\match{\lft{A\iimp B}\mapsto \rght{A}}{\Upsilon}}{
       \infer[2]{\match{\lft{A\iimp B}}{\Upsilon}}
             {\twoseq{}{\ngg{\lft{A\iimp B}},\Upsilon}} &
              \infer[3]{\Ratch{}{\rght{A}}{\Upsilon}}
                               {\twoseq{}{\rght{A},\Upsilon}}} &
  \infer[4]{\Ratch{}{\lft{B}}{\ngg{\lft{B}},\Upsilon}}{}}}
\]
This derivation uses the \twophase rules (1) $\decide$, (2)
$\debit_2$, (3) $\release$, and (4) $\initR$.  The associated
synthetic inference rule is thus
\[
  \infer{\twoseq{}{\ngg{\lft{B}},\Upsilon}}
        {\twoseq{}{\ngg{\lft{A\iimp B}},\Upsilon} &
         \twoseq{}{\rght{A},\Upsilon}}
\]
In this example, since $\Upsilon$ can only contain atomic expressions
of the form $\lft{\cdot}$, we can write $\lft{\Gamma}$ for $\Upsilon$.
Thus, we have correctly captured the $[\iimp E]$ inference rule in
\fref{natded}. 

Deciding on $(Id_1)$ and $(Id_2)$, respectively, yields
\[
  \infer{\twoseq{}{\rght{B},\Upsilon}}{
  \infer{\match{\lft{B}\Times\rght{B}}{\rght{B},\Upsilon}}{
     \infer[\debit_2]{\match{\lft{B}}{\Upsilon}}{\twoseq{}{\ngg{\lft{B}},\Upsilon}} &
     \infer[\initL]{\match{\rght{B}}{\rght{B},\Upsilon}}{}}}
\]\[
  \infer{\twoseq{}{\ngg{\lft{B}},\Upsilon}}{
  \infer{\match{\One\mapsto\rght{B}\Mapsto\lft{B}}{\ngg{\lft{B}},\Upsilon}}{
     \infer{\match{\One\mapsto\rght{B}}{\Upsilon}}{
            \infer{\match{\One}{\Upsilon}}{} &
            \infer[\release]{\Ratch{}{\rght{B}}{\Upsilon}}{\twoseq{}{\rght{B},{\Upsilon}}}} &
     \infer[\initL]{\Ratch{}{\lft{B}}{\ngg{\lft{B}},\Upsilon}}{}}}
\]
and these yield the two synthetic rules
\[
  \infer[\quad\hbox{and}\quad]
        {\twoseq{}{\rght{B},\Upsilon}}{\twoseq{}{\ngg{\lft{B}},\Upsilon}}
  \quad
  \infer[.]{\twoseq{}{\ngg{\lft{B}},\Upsilon}}{\twoseq{}{\rght{B},\Upsilon}}
\]
These rules encode the natural deduction rules $[M]$ and $[S]$ rules,
respectively.

Consider the synthetic rules using the $(\veeLa)$ rule in \fref{ljdef}
for a final example.
\[
  \infer{\match{\lft{A\vee B}\mapsto \lft{A}\Plus\lft{B}}{\Delta,\Upsilon}}{
     \infer[\debit_2]{\match{\lft{A\vee B}}{\Upsilon}}
                     {\twoseq{}{\ngg{\lft{A\vee B}},\Upsilon}}
     \infer[\release]{\Ratch{}{\lft{A}\Plus\lft{B}}{\Delta,\Upsilon}}{
     \infer{\twoseq{}{\lft{A}\Plus\lft{B},\Delta,\Upsilon}}{
            \twoseq{}{\lft{A},\Delta,\Upsilon} &
            \twoseq{}{\lft{B},\Delta,\Upsilon}}}}
\]
Note that $\Delta$ could be either $\ngg{\lft{C}}$ or $\rght{C}$ for
some formula $C$.  As a result, the left introduction for disjunction
can appear in either the $\downarrow$ or $\uparrow$ style judgments.
Thus, this synthetic inference rule faithfully captures the $(\vee E)$
inference rule in \fref{natded2}.

Let $\Gamma\njDash C\uparrow$ and $\Gamma\njDash C\downarrow$ denote,
respectively, the facts that $\Gamma\vdash C\uparrow$ and
$\Gamma\vdash C\downarrow$ are provable using the rules in
\fref{natded} and~\ref{fig:natded2}.  Let $\Rules_{nj}$ be
the rules in \fref{ljdef}.  The following proposition holds.

\begin{prop}\label{prop:nj encoded}
Let $\Gamma\cup\{C\}$ be a set of object-level formulas and assume
that all $\bias{\rght{\cdot}}=-1$ and $\bias{\lft{\cdot}}=+2$.  Then
$\Gamma\njDash C\uparrow$ if and only if
$\twoseq{}{\lft{\Gamma},\rght{C}}$ is provable using $\Rules_{nj}$.
and $\Gamma\njDash C\downarrow$ if and only if
$\twoseq{}\lft{\Gamma},\ngg{\lft{C}}$ is provable using $\Rules_{nj}$.
\end{prop}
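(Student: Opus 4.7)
The plan is to establish the biconditional by bidirectional induction, using a systematic correspondence between NJ rules and the synthetic rules generated from $\Rules_{nj}$ in \twophase. I first fix the encoding already suggested by the statement: an uparrow judgment $\Gamma\vdash C\uparrow$ corresponds to the border sequent $\twoseq{}{\lft{\Gamma},\rght{C}}$, while a downarrow judgment $\Gamma\vdash C\downarrow$ corresponds to $\twoseq{}{\lft{\Gamma},\ngg{\lft{C}}}$. The choice $\bias{\lft{\cdot}}=+2$ places the entire $\lft{\Gamma}$ portion into the classical-realm multiset $\Supers$, where weakening and contraction are built in and match NJ's set-valued context, while $\bias{\rght{\cdot}}=-1$ keeps the single goal atom linear.

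The second step is to enumerate, for every $R\in\Rules_{nj}$, the synthetic rule it generates under this bias assignment. The text has already carried out four representative derivations (for $(\iimpLm)$, $(Id_1)$, $(Id_2)$, and $(\veeLa)$); the remaining cases are routine, since each \decide step is followed by a forced cascade of $\mapsto$, $\Mapsto$, $\Plus$, $\Times$, and $\One$ rules and is closed by $\initL$, $\initR$, $\debit_2$, or \release as dictated by the polarities. Tabulating the outcome, each $R$ matches exactly one rule of \fref{natded} or \fref{natded2}: $(\iimpLm)\leftrightarrow[\iimp E]$, $(\iimpRm)\leftrightarrow[\iimp I]$, the two $(\wedgeLa)$ rules yield the two $[\wedge E]$ rules, $(\wedgeRa)\leftrightarrow[\wedge I]$, $(\veeLa)\leftrightarrow[\vee E]$ (where the ambient non-$\Supers$ atom may be either $\rght{C}$ or $\ngg{\lft{C}}$, capturing both variants of $[\vee E]$), the two $(\veeRa)$ rules yield the two $[\vee I]$ rules, $(\ifalseLa)\leftrightarrow[\ifalse E]$, $(\itrueRa)\leftrightarrow[\itrue I]$, and $(Id_1)\leftrightarrow[M]$, $(Id_2)\leftrightarrow[S]$. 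The NJ rule $[I]$ has no counterpart in $\Rules_{nj}$; it is supplied directly by the \twophase rule $\initd$, which applies because $\bias{\lft{A}}=+2>0$.

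For the forward direction, I would induct on the NJ derivation, replacing each rule instance by its matching synthetic rule (or by $\initd$ in the case of $[I]$). For the backward direction, I would invoke Proposition~\ref{prop:completeness3} to pass from \basic to \twophase, then use Propositions~\ref{prop:reduced ps3} and~\ref{prop:completeness single decide} to restrict attention to reduced, single-decide \twophase-proofs. Such a proof decomposes, from the root upward, into a sequence of synthetic inferences between border sequents, and I would induct on the number of these synthetic rules, translating each one back into the NJ rule identified in the enumeration above.

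The chief obstacle is maintaining the shape invariant in the backward direction: at every border sequent encountered, the right-hand side must consist only of $\lft{\cdot}$ atoms together with either exactly one $\rght{C}$ or exactly one debt $\ngg{\lft{C}}$, so that the sequent encodes a legitimate NJ judgment. This invariant is preserved by each synthetic rule thanks to the bias assignment: classical bias $+2$ on $\lft{\cdot}$ collects those atoms into $\Supers$, where they are replicated uniformly across premises, whereas linear bias $-1$ on $\rght{\cdot}$ forbids duplication, so the non-$\lft{\cdot}$ atoms are threaded between premises in exactly the pattern prescribed by the matching NJ rule.
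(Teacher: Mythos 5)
Your proposal is correct and takes essentially the same route as the paper, which establishes this proposition implicitly by computing the synthetic rules generated from $\Rules_{nj}$ under the bias assignment $\bias{\lft{\cdot}}=+2$, $\bias{\rght{\cdot}}=-1$ and observing that, under the stated encoding of the $\uparrow$ and $\downarrow$ judgments, they coincide one for one with the rules of \fref{natded} and \fref{natded2} --- exactly the tabulation you carry out. Your added scaffolding (the explicit double induction, the observation that $[\textrm{I}]$ is supplied by $\initd$, the appeal to Propositions~\ref{prop:completeness3}, \ref{prop:reduced ps3}, and~\ref{prop:completeness single decide} to decompose a proof into synthetic rules for the backward direction, and the shape invariant on border sequents) only makes precise what the paper leaves to the reader.
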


If the disjunction $\vee$ is removed, then derivations are considered
\emph{normal} (also, \emph{cut free}) if they do not use switch rule
($[S]$ rule in \fref{natded}).  Thus, normal proofs can be encoded for
such formulas simply by removing $(Id_2)$ from consideration in
Proposition~\ref{prop:nj encoded}.  See \cite{nigam10jar} for a
similar result but where \twophase is replaced by a linear logic proof
system.

\subsection{Sequent calculi for classical and intuitionistic logic}
\label{ssec:sequents}

When the polarities attributed to $\lft{\cdot}$ and $\rght{\cdot}$ are
both negative, the synthetic rules based on the rules in \fref{ljdef}
encode sequent calculus proofs.  For an example, if we assign
$\bias{\lft{\cdot}}=-2$ and $\bias{\rght{\cdot}}=-1$, then the
implication left rule $(\iimpLm)$ yields the following synthetic
inference rule.
\[
  \infer{\twoseq{}{\lft{A\iimp B},\Delta,\Upsilon}}{
         \twoseq{}{\rght{A},\lft{A\iimp B},\Upsilon} &
         \twoseq{}{\lft{B},\lft{A\iimp B},\Delta,\Upsilon}}
\]
This synthetic inference rule encodes the sequent calculus rule
(assuming that $\Delta$ is the multiset consisting of one occurrence
of $\rght{C}$).
\[
  \infer{\twoseq{A\iimp B,\Gamma}{C}}
        {\twoseq{A\iimp B,\Gamma}{A} &
         \twoseq{A\iimp B, B,\Gamma}{C}}
\]

If we change the bias assignment so that $\bias{\lft{\cdot}}=-2$ and
$\bias{\rght{\cdot}}=-2$ and consider the same implication-left
inference rule, then the same development holds except that the
multiset $\Delta$ is empty since all atoms belong to the classical
realm: the schema variable $\Upsilon$ will hold
atoms of both the form $\lft{\cdot}$ and $\rght{\cdot}$.  As a result,
we get the derived inference rule
\[
  \infer[.]{\twoseq{A\iimp B,\Gamma}{\Psi}}
           {\twoseq{A\iimp B,\Gamma}{A,\Psi} &
            \twoseq{A\iimp B, B,\Gamma}{\Psi}}
\]

As with the natural deduction calculus, the $(Id_1)$ and $(Id_2)$
rules have special roles.  In particular, using \decide with them
yields the following.
\[
  \infer[\decide~Id_1]
        {\match{\lft{C}\Times\rght{C}}{\lft{C},\rght{C},\Upsilon}}{
         \infer[\initL]{\match{\lft{C} }{\lft{C}, \Upsilon}}{} &
         \infer[\initL]{\match{\rght{C}}{\rght{C},\Upsilon}}{}}
\]
\[
  \infer[\decide~Id_2]
        {\match{\One\Mapsto\rght{C}\mapsto\lft{C}}{\Delta,\Upsilon}}{
         \infer{\match{\One}{\Upsilon}}{} &
         \twoseq{}{\rght{C}, \Upsilon} &
         \twoseq{}{\lft{C},\Delta,\Upsilon}}
\]
These are the following synthetic rules
\[
  \infer{\twoseq{}{\lft{C},\rght{C},\Upsilon}}{\strut}
  \qquad
  \infer{\twoseq{}{\Delta,\Upsilon}}{
         \twoseq{}{\rght{C}, \Upsilon} &
         \twoseq{}{\lft{C},\Delta,\Upsilon}}
\]
In the intuitionistic setting, the variable $\Upsilon$ contains only
$\lft{\cdot}$ atomic expressions while $\Delta$ contains only a single
expression and that is of the form $\rght{\cdot}$.  Thus,
$(Id_1)$ and $(Id_2)$ encode the \init and \cut rules of sequent
calculus.  (This encoding works for both intuitionistic and classical
logic.)

Given this discussion, the following has a direct proof.  

\begin{prop}[Negative bias encodes sequent calculus]
\label{prop:sequent}
If $\bias{\lft{\cdot}}=-2$ and $\bias{\rght{\cdot}}=-1$ then the rules
in \fref{ljdef} encode a sequent calculus proof system
(similar to Gentzen's \LJ proof system) which is complete for
intuitionistic logic.  If, however, we change the bias assignment so
that $\bias{\rght{\cdot}}=-2$, then the rules in \fref{ljdef}
encode a sequent calculus proof system (similar to Gentzen's \LK proof
system) which is complete for classical logic.
\end{prop}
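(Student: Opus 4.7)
The plan is to proceed by case analysis on the rule expressions in \fref{ljdef}, computing for each one the synthetic rule it generates under the stated bias assignment and matching that rule to one of a standard sequent calculus. The worked computations in Section~\ref{ssec:sequents} for $(\iimpLm)$, $(Id_1)$, and $(Id_2)$ already supply the template for every remaining case.

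First I would fix the shape of the left phase under these biases. Since every tag has negative bias, every leaf of the form $\match{A}{\Atoms,\Supers}$ is closed by \initL (neither $\debit_1$ nor $\debit_2$ is available), and every sequent of the form $\Ratch{}{E}{\Atoms,\Supers}$ is closed by \release (since \initR needs a positive-bias debit). Consequently, each rule expression unfolds deterministically, dictated by its $\mapsto,\Mapsto,\Plus,\Times$ skeleton. The multiplicatively-tagged rules $(\iimpLm)$ and $(\iimpRm)$ thereby yield Gentzen's multiplicative left and right rules for implication: the $\Mapsto$ restricts the principal subformula's subderivation to $\Supers$ alone, and the $\mapsto$ splits $\Delta$ multiplicatively between premises. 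The additively-tagged rules for $\wedge$, $\vee$, $\itrue$, $\ifalse$ unfold to the usual additive forms of the corresponding sequent rules, since a $\Plus$ inside an $R$-expression becomes an additive branching or choice and \release copies the context into each right-phase premise. The identity rules $(Id_1)$ and $(Id_2)$, as already computed in Section~\ref{ssec:sequents}, synthesize to the initial axiom and to cut respectively.

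The two bias regimes then give two different sequent calculi. When $\bias{\rght{\cdot}}=-1$, every $\rght{\cdot}$ tag is linear and lives in $\Delta$; since each synthetic rule either carries a single $\rght{\cdot}$ additively or replaces it in the right phase, every provable sequent has at most one $\rght{\cdot}$ atom, matching single-conclusion \LJ. When $\bias{\rght{\cdot}}=-2$, the tags $\rght{\cdot}$ join $\Upsilon$ in the classical realm and inherit the built-in contraction and weakening on $\Supers$, giving multi-conclusion \LK. Soundness of the encoding is then immediate by the rule-by-rule correspondence, and completeness reduces to the classical completeness of \LJ and \LK for their respective logics. The main obstacle is the routing check in the intuitionistic case: in each multiplicative left rule such as $(\iimpLm)$, one must verify that the unique $\rght{C}$ from $\Delta$ is consistently routed to the premise that needs it, but this is forced because the alternative premise introduces its own new $\rght{\cdot}$ subformula via $\Mapsto$ and \release.
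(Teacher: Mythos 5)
Your proposal is correct and takes essentially the same route as the paper, whose entire ``proof'' is the remark that the proposition follows directly from the preceding discussion in Section~\ref{ssec:sequents}, i.e., the rule-by-rule computation of synthetic rules for the implication-left rule under both bias assignments and for $(Id_1)$ and $(Id_2)$. You in fact supply more of the argument than the paper does --- in particular the observation that all-negative biases make the left phase close only via \initL and \release, and the single-right-formula invariant separating the \LJ regime from the \LK regime.
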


By using Propositions~\ref{prop:debit independent}, \ref{prop:nj
  encoded}, and \ref{prop:sequent}, we can conclude immediately that
if a formula has a natural deduction proof then it has a sequent
calculus proof, since the only difference between these two encodings
is the use of the debit rules.

It is worth noting that if $\bias{\cdot}$ is modified so that for some
atomic expressions $A$, the value of $\bias{A}$ changes from $-1$ to
$-2$, then proofs in \basic remain proofs.  Thus, it is immediate that
sequent provability in intuitionistic logic yields sequent provability
in classical logic.

It is possible to encode the sequent calculus for both classical and
intuitionistic logic at a more primitive level: that is, by using
only the linear realm.  In particular, consider the specification in
\fref{ljdefalt}.  If $\bias{\lft{\cdot}}=-1$ and
$\bias{\rght{\cdot}}=-1$, then these rules yield sequent calculus
proofs for intuitionistic logic.  Dropping the use of the classical
realm affected this specification in two ways.  First, we needed to
add explicit weakening and contraction rules for left formula (the
rules $(LW)$ and $(LC)$, respectively).  Second, in encoding the
implication-left rule and the cut rule, the occurrences of the
right-side formula must be explicitly addressed in the rule's
specification.  In order to capture classical sequent calculus, we can
modify the rules in \fref{ljdefalt} by replacing the left rule for
implication and the $(Id_2)$ rule with the rules
\[
\begin{array}{c@{\quad}r@{\ }l}
  & \lft{A\iimp B}&\mapsto\rght{A} \mapsto\lft{B}\\
  & \One          &\mapsto\rght{C}\mapsto\lft{C}\\
\end{array}
\] 
and by adding the following explicit rules for weakening and
contraction for right tagged formulas. 
\[
(RW)~\rght{B} \mapsto\ \One \qquad
(RC)~\rght{B} \mapsto\ \rght{B}\Times\rght{B}
\]



\begin{figure}
$$\begin{array}{c@{\quad}r@{\ }l}
 &\lft{A\iimp B}\Times\rght{C}  &\mapsto\ \rght{A}\mapsto\lft{B}\Times\rght{C}\\
 &\rght{A\iimp B}   &\mapsto\ \lft{A}\Times\rght{B}\\
 &\lft{A\wedge B}   &\mapsto\ \lft{A}\\
 &\rght{A\wedge B}  &\mapsto\ \rght{A} \Plus \rght{B}\\
 &\lft{A\wedge B}   &\mapsto\ \lft{B}\\
 &\lft{A\vee B}     &\mapsto\ \lft{A}\Plus\lft{B}\\
 &\rght{A\vee B}    &\mapsto\ \rght{A}\\
 &\rght{A\vee B}    &\mapsto\ \rght B\\
 &\lft{\ifalse}     &\mapsto\ \Zero\\
 &\rght{\itrue}     &\mapsto\ \Zero\\
(Id_1) &\lft{C}\Times\rght{C} & \\
(Id_2) &\rght{A}     &\mapsto\rght{C}\mapsto\lft{C}\Times\rght{A}\\
(LW) &\lft{B}        &\mapsto\ \One \\
(LC) &\lft{B}        &\mapsto\ \lft{B} \Times\lft{B} \\
\end{array}$$
  \caption{Some rewrite rules used to specify sequent calculus proofs
    in intuitionistic logic entirely in the linear realm.}
\label{fig:ljdefalt}
\end{figure}

\subsection{Alternative encodings of proof rules}  

\begin{figure}
$$\begin{array}{c@{\quad}r@{\ }l@{\qquad}c@{\quad}r@{\ }l}
(\iimpLa) &\lft{A\iimp B}    &\mapsto\ \rght{A}\\
(\iimpRa) &\rght{A\iimp B}   &\mapsto\ \lft{A}\\
(\iimpRa)  &\rght{A\iimp B}   &\mapsto\ \rght{B}\\
(\wedgeLm) &\lft{A\wedge B}   &\mapsto\ \lft{A}\Times\lft{B}\\
(\wedgeRm) &\rght{A\wedge B}  &\mapsto\ \rght{A} \mapsto \rght{B}\\
(\veeLm) &\lft{A\vee B}     &\mapsto\ \lft{A}\mapsto\lft{B}\\
(\veeRm) &\rght{A\vee B}    &\mapsto\ \rght{A}\Times\rght{B}\\
(\ifalseLm) &\lft{\ifalse}     &\mapsto\ \One\\
(\itrueRm) &\rght{\itrue}     &\mapsto\ \One
\end{array}$$
  \caption{Some alternative version of inference rules.}
\label{fig:ljdefx}
\end{figure}

\fref{ljdefx} contains alternative specifications of the introduction
rules for some propositional logic constants.  In particular,
while \fref{ljdef} provides multiplicative rules for implication and
additive rules for conjunction, disjunction, true, and false, in
\fref{ljdefx}, we find additive rules for implication and multiplicative
rules for conjunction, disjunction, true, and false.  As is well known,
the presence of the structural rules (of weakening and contraction)
allows some pairing of these rules to be inter-admissible.

If we switch from the additive rules for conjunction ($\wedgeLa$ and
$\wedgeRa$ in \fref{ljdef}) to the multiplicative rules ($\wedgeLm$
and $\wedgeRm$ in \fref{ljdefx}), then the conjunction elimination
rule of intuitionistic natural deduction can be computed as follows.
\[
  \infer{\match{\lft{A\wedge B}\mapsto \lft{A}\Times\lft{B}}{\Delta,\Upsilon}}{
     \infer[\debit_2]{\match{\lft{A\wedge B}}{\Upsilon}}
                     {\twoseq{}{\ngg{\lft{A\wedge B}},\Upsilon}} &
     \infer[\release]{\Ratch{}{\lft{A}\Times\lft{B}}{\Delta,\Upsilon}}{
     \infer{\twoseq{}{\lft{A}\Times\lft{B},\Delta,\Upsilon}}{
            \twoseq{}{\lft{A},\lft{B},\Delta,\Upsilon}}}}
\]
Since $\Delta$ could be either $\ngg{\lft{C}}$ or $\rght{C}$ for some
formula $C$, the left-introduction for conjunction can appear in
either the $\downarrow$ or $\uparrow$ style judgments.
\[
\infer{\Gamma \vdash C~\mathord\uparrow (\downarrow)}
      {\Gamma \vdash A\wedge B\downarrow&
       \Gamma,A, B \vdash C~\mathord\uparrow (\downarrow)}
\]
This natural deduction rule is an example of a \emph{generalized
elimination rule}~\cite{schroeder-heister84,plato01aml}.

\subsection{Free deduction in classical logic}
\label{ssec:fd}

\begin{figure}
\[
\begin{array}{c@{\quad}r@{\ }l@{\qquad}c@{\quad}r@{\ }l}
 &\One\mapsto &  \lft{A\wedge B}\mapsto \rght{A}\mapsto \rght{B}\\
 &\One\mapsto & \rght{A\wedge B}\mapsto \lft{A}\\
 &\One\mapsto & \rght{A\wedge B}\mapsto \lft{B}\\
 &\One\mapsto & \lft{A\vee B}\mapsto \rght{A}\\
 &\One\mapsto & \lft{A\vee B}\mapsto \rght{B}\\
 &\One\mapsto & \rght{A\vee B}\mapsto \lft{A}\mapsto\lft{B}\\
 &\One\mapsto & \lft{A\iimp B}\mapsto \lft{A}\\
 &\One\mapsto & \lft{A\iimp B}\mapsto \rght{B}\\
 &\One\mapsto & \rght{A\iimp B}\mapsto \rght{A}\mapsto\lft{B}\\
\end{array}
\]
  \caption{Specification of the free deduction for classical logic.}
\label{fig:fd}
\end{figure}

\begin{figure}
\[
\begin{array}{c@{\quad}r@{\ }l}
 & \rght{A\wedge B}&\Times\ \lft{A}\Times\ \lft{B}\\
 & \lft{A\wedge B}&\Times\ \rght{A}\\
 & \lft{A\wedge B}&\Times\ \rght{B}\\
 & \rght{A\vee B}&\Times\ \lft{A}\\
 & \rght{A\vee B}&\Times\ \lft{B}\\
 & \lft{A\vee B}&\Times\ \rght{A}\Times\rght{B}\\
 & \rght{A\iimp B}&\Times\ \rght{A}\\
 & \rght{A\iimp B}&\Times\ \lft{B}\\
 & \lft{A\iimp B}&\Times\ \lft{A}\Times\rght{B}\\
\end{array}
\]
\caption{Several simple expressions provable from \klassical.}
\label{fig:simple}
\end{figure}

Given the interpretation of $(Id_1)$ and $(Id_2)$ as the $[M]$ and
$[S]$ inference rules in natural deduction, it is tempting to consider
both $\lft{B}$ and $\ngg{\rght{B}}$ and both $\rght{B}$ and
$\ngg{\lft{B}}$ as equivalent in some sense.  Such possible
equivalences do not immediately apply to rules, however, since rule
expressions do not contain debt expressions.  It might be possible,
however, to link proofs using a rule of the form $\lft{A}\mapsto
\lft{B}$ with a proof using a rule of the form
$\One\mapsto\rght{A}\mapsto\lft{B}$.  We illustrate such
considerations in this section.

The Free Deduction proof system~\cite{parigot92rclp} can be
encoded as follows.  Let \FD be the set of rules that results from
taking the rules in \fref{fd} along with $(Id_1)$ 
the following variant of $(Id_2)$:
\[
(Id_3)\qquad \One \mapsto\rght{C}\mapsto\lft{C}.
\]
Also, let \klassical be composed of the rules in \fref{ljdef}.  When
using both of these sets of rules, we assume that
$\bias{\lft{\cdot}}=\bias{\rght{\cdot}}=-2$.  As we have seen, under
this bias assignment, the rules in \klassical encode a classical
sequent system.  It is a simple exercise to show that all of the
expressions in \fref{simple} are provable from
\klassical.  Also, note the strong similarities between the rules in
\fref{fd} and the expressions in \fref{simple}: by dropping the
$\One\mapsto$ prefix, changing the remaining occurrences of $\mapsto$
to $\Times$, and flipping the left and right tags, we can convert
rules in \fref{fd} to expressions in \fref{simple}.

It is easy to show that every use of a rule in \FD can be emulated by
deciding on an expression in \fref{simple}.  For example, the
synthetic rule that results from the first rule in \fref{fd} is
\[
  \infer{\twoseq{}{\Delta_1,\Delta_2,\Delta_3}}
        {\twoseq{}{\lft{A\wedge B},\Delta_1} &
         \twoseq{}{\rght{A},\Delta_2} &
         \twoseq{}{\rght{B},\Delta_3}}
\]
This inference rule can be modeled in \basic by deciding on the first
expression in \fref{simple} and using $(Id_1)$ three times (and
with shifting the polarity to
$\bias{\lft{\cdot}}=\bias{\rght{\cdot}}=+2$):
\[
  \infer{\twoseq{}{\Delta_1,\Delta_2,\Delta_3}}{
  \infer{\twoseq{\rght{A\wedge B}\Times\lft{A}\Times\lft{B}}
                {\Delta_1,\Delta_2,\Delta_3}}{
         \twoseq{}{\ngg{\rght{A\wedge B}},\Delta_1} &
         \twoseq{}{\ngg{\rght{A}},\Delta_2} &
         \twoseq{}{\ngg{\rght{B}},\Delta_3}}}
\]
By using decide on the rule $(Id_3)$ on all three premises above, we
can build a \basic derivation that flips the debt $\ngg{\rght{A\wedge
    B}}$ to the atomic expression $\lft{A\wedge B}$ (as in the $[S]$
inference rule in Section~\ref{ssec:nd}).  It is now a simple matter
to use the clip-elimination theorem to remove the intermediate lemmas
listed in \fref{simple} for direct \basic-proofs.  Once we have such
\basic-proofs, Theorem~\ref{prop:completeness3} can
provide an \twophase proof corresponding to classical sequent calculus
proof.




\subsection{Linear logic}
\label{ssec:linear}

\begin{figure}
$$\begin{array}{c@{\quad}r@{\ }l}
(\limp L)    &\lft{A\limp B}    &\mapsto\ \rght{A}\mapsto\lft{B}.\\
(\limp R)    &\rght{A\limp B}   &\mapsto\ \lft{A} \Times\rght{B}.\\
(\otimes L)  &\lft{A\otimes B}  &\mapsto\ \lft{A} \Times\lft{B}.\\
(\otimes R)  &\rght{A\otimes B} &\mapsto\ \rght{A}\mapsto\rght{B}.\\
(\with L_1)  &\lft{A\with B}    &\mapsto\ \lft{A}.\\
(\with R)    &\rght{A\with B}   &\mapsto\ \rght{A}\Plus\rght{B}.\\
(\with L_2)  &\lft{A\with B}    &\mapsto\ \lft{B}.\\
(\oplus R_1) &\rght{A\oplus B}  &\mapsto\ \rght{A}.\\
(\oplus L)   &\lft{A\oplus B}   &\mapsto\ \lft{A}\Plus\lft{B}.\\
(\oplus R_2) &\rght{A\oplus B}  &\mapsto\ \rght{B}.\\
(\lpar L)    &\lft{A\lpar B}    &\mapsto\ \lft{A} \mapsto \lft{B}.\\
(\lpar R)    &\rght{A\lpar B}   &\mapsto\ \rght{A}\Times\rght{B}.\\
(\one L)     &\lft{\one}\hbox to 0pt{.\hss}       &\\
(\one R)     &\rght{\one}       &\Mapsto\ \One.\\
(\bottom L)  &\lft{\bottom}     &\Mapsto\ \One.\\
(\bottom R)  &\rght{\bottom}\hbox to 0pt{.\hss}   &\\
(\zero L)    &\lft{\zero}       &\mapsto\ \Zero.\\
(\top R)     &\rght{\top}       &\mapsto\ \Zero.\\
(\bang L)    &\lft{\bang B}     &\mapsto\ \llft{B}.\\
(\bang R)    &\rght{\bang B}    &\mapsto\ \rght{B}\Mapsto\One.\\
(\quest L)   &\lft{\quest B}    &\mapsto\ \lft{B} \Mapsto\One.\\
(\quest R)   &\rght{\quest B}   &\mapsto\ \rrght{B}.\\
(\derL)      &\llft{B}          &\mapsto\ \lft{B}.\\
(\derR)      &\rrght{B}         &\mapsto\ \rght{B}.
\end{array}$$
\caption{Specification of linear logic.}
\label{fig:ll-def}
\end{figure}

\fref{ll-def} contains a specification for linear logic.  This
specification makes use of \emph{four} tags: $\lft{\cdot}$,
$\rght{\cdot}$, $\llft{\cdot}$, and $\rrght{\cdot}$.  Here,
$\lft{\cdot}$ and $\rght{\cdot}$ construct atomic expressions that
should be in the linear realm while $\llft{\cdot}$ and $\rrght{\cdot}$
construct atomic expressions that should be in the classical realm.  A
sequent calculus for linear logic arises when we use the bias
assignment $\bias{\lft{\cdot}}=\bias{\rght{\cdot}}=-1$ and
$\bias{\llft{\cdot}}=\bias{\rrght{\cdot}}=-2$.

\subsection{Quantification}
\label{ssec:quantication}

Some of the earliest work on logic frameworks (for example, using
$\lambda$Prolog \cite{felty88cade} and the dependently typed LF
\cite{harper93jacm,felty90cade}) provided elegant approaches to the
treatment of quantifiers in the specification of proof systems.  The
essence of these treatments of quantifiers is described via the notion
of \emph{binder mobility} \cite{miller19jar}, a concept we illustrate
briefly here.  We first extended the grammar of $E$ and $R$ formulas
to allow both $\quant x (E~x)$ and $\quant x (R~x)$, where $\quant x$
is a binder for $x$ over expressions and rules.  Next, we need to add
to sequents a place for expression-level binders to move.  To this
end, we attach a variable-binding context $\Sigma$ to all sequents.
Thus, sequents have the structure $\sig{}\twoseq{\Gamma}{\Delta}$ and
$\sig{}\match{\Gamma}{\Delta}$.  In both of these cases, $\Sigma$ is a
list of distinct variables, all with scope intended over the formulas
in the respective sequent.  We assume the usual notions of $\alpha$,
$\beta$, and $\eta$ conversion.  The following two rules can be added
to \basic to treat quantifiers.
\[
  \infer{\sig{}\twoseq{\Gamma}{\quant x E~x,\Delta}}
        {\sig{,x}\twoseq{\Gamma}{E~x,\Delta}}
\qquad
  \infer{\sig{}\twoseq{\Gamma,\quant x R~x}{\Delta}}
        {\sig{}\twoseq{\Gamma,R~t}{\Delta}\quad
         t\hbox{ is a $\Sigma$-term}}
\]
In the first rule, we assume that $x$ is not already bound by
$\Sigma$.  In that rule, the expression-level binder for $x$ in the
conclusion is moved to a sequent-level binder for $x$ in the
premise.  The proviso in the second inference rule means that the free
variables of the (first-order) term $t$ are all taken from $\Sigma$.

Finally, to illustrate how quantifiers can be used to specify rules,
we first explicitly quantify over schema variables in the
specification of rules.  For example, the rule $(\iimpLm)$ in
\fref{ljdef} should be written more explicitly as 
\[
  \quant A\quant B \lft{A\iimp B} \mapsto\ \rght{A} \Mapsto \lft{B}
\]
Adding universal and existential quantification to the intuitionistic
and classical logic of Section~\ref{ssec:classical} can be done using
the (closed) $R$-expressions
$$\begin{array}{r@{\quad}r@{\ }l@{\qquad}c@{\quad}r@{\ }l}
\quant B\quant t &\lft{\forall x. B x}   &\mapsto\ \lft{B t}\\
\quant B         &\rght{\forall x. B x}  &\mapsto\ \quant x\rght{B x}\\
\quant B         &\lft{\exists x. B x}   &\mapsto\ \quant x\lft{B x}\\
\quant B\quant t &\rght{\exists x. B x}  &\mapsto\ \rght{B t}\\
\end{array}$$

\section{Related work}
\label{sec:related}

The two-phase proof system \twophase resembles \emph{uniform proofs},
which were used to describe logic programming as the search for proofs
in a two-phase proof system that alternated between a
\emph{goal-reduction} phase and a \emph{backchaining} phase
\cite{miller91apal}.  Andreoli's focused proof
system~\cite{andreoli92jlc} for Girard's linear
logic~\cite{girard87tcs} also inspired design aspects of
\framework. The closest related work, however, is the following
collection of papers that have used linear logic as a logical
framework for specifying proof systems.  The author showed how a
version of linear logic based on the negative connectives can
be used to specify sequent calculus and natural
deduction proof systems~\cite{miller96tcs}.  Nigam, Pimentel, and
others significantly extended such specifications, especially once
subexponentials were added to linear logic \cite{pimentel01phd},
\cite{nigam09phd}, \cite{nigam09ppdp}, \cite{nigam10jar},
\cite{nigam10lsfa}, \cite{miller13tcs}, \cite{nigam14jlc}.
Implementations and formal results surrounding such linear logic
specifications have also been built \cite{reis12tatu,olarte23jlamp}. A
design goal for \framework was to use it to replace linear logic as
the framework while attempting to find the fewest features of linear
logic that made it successful for specifying proof systems.

\section{Conclusion}

The state of the search for proofs in classical and intuitionistic
logic can be viewed as a collection of sheets of paper, each
containing assumptions and a conclusion: such sheets denote a gap in
the proof to be completed. An inference rule is encoded in reverse as
a rule for rewriting a sheet into 0 or more other sheets.  \framework
starts with this simple perspective of inference and formalizes
inference as the rewriting of collections of multisets of tagged
formulas. In doing so, the multiplicative and additive structures
behind logical inference are treated as primitive. This framework also
uses a bias assignment for tagged formulas that captures the notions
of linear and classical realm and of debt. We have also illustrated
how \framework specifications of inference rules can be used to
represent a range of known proof systems modularly. We demonstrated
this modularity by showing that one set of rewrite rules can account
for sequent calculus and natural deduction proofs in classical and
intuitionistic logic.

\bibliographystyle{IEEEtran}

\end{document}
